\DeclareMathOperator*{\argmin}{arg\,min}
\DeclareMathOperator*{\argmax}{arg\,max}
\newcommand{\mt}[1]{\mathrm{#1}}
\newcommand{\mctp}{MINTB}
\newcommand{\parti}{PARTITION}
\newcommand{\sepr}{SP}
\newcommand{\vc}{VC}
\newcommand{\algo}{Algo}
\newcommand{\reals}{\mathbb{R}}
\newcommand{\paths}{\mathcal{P}}
\newcommand{\G}{\mathcal{G}}
\title{New Complexity Results and Algorithms for the Minimum Tollbooth Problem}
\author{Soumya Basu \and Thanasis Lianeas \and
Evdokia Nikolova}
\institute{University of Texas at Austin, USA\\
}
\begin{document}

\maketitle

\begin{abstract}
The inefficiency of the Wardrop equilibrium of nonatomic routing games can be eliminated by placing tolls on the edges of a network so that the socially optimal flow is induced as an equilibrium flow.  A solution where the minimum number of edges are tolled may be preferable over others due to its ease of implementation in real networks. In this paper we consider the minimum tollbooth ($\mctp$) problem, which seeks social optimum inducing tolls with minimum support.  We prove for single commodity networks with linear latencies that the problem is NP-hard to approximate within a factor of $1.1377$ through a reduction from the minimum vertex cover problem. Insights from network design motivate us to formulate a new variation of the problem where, in addition to placing tolls,  it is allowed to remove unused edges by the social optimum.  We prove that this new problem remains NP-hard even for single commodity networks with linear latencies, using a reduction from the partition problem.  On the positive side, we give the first exact polynomial solution to the $\mctp$  problem in an important class of graphs---series-parallel graphs. {Our algorithm solves $\mctp$ by first tabulating the candidate solutions for subgraphs of the series-parallel network and then combining them optimally.}  
       
\end{abstract}

\section{Introduction}


Traffic congestion levies a heavy burden on millions of commuters across the globe.  The congestion cost to the U.S. economy was measured to be \$126 billion in the year 2013 with an estimated increase to \$186 billion by year 2030 \cite{inrix2014cost}.  Currently the most widely used method of mitigating congestion is through congestion pricing, and one of the most common pricing schemes is through placing tolls on congested roads that users have to pay, which makes these roads less appealing and diverts demand, thereby reducing congestion. 

Mathematically, an elegant theory of traffic congestion was developed starting with the work of Wardrop \cite{wardrop1952road} and Beckman et al. \cite{beckmann1956studies}. This theory considered a network with travel time functions that are increasing in the network flow, or the number of users, on the corresponding edges.  Wardrop differentiated between two main goals: (1) user travel time is minimized, and (2) the total travel time of all users is minimized.  This led to the investigation of two different resulting traffic assignments, or flows, called a Wardrop equilibrium and a social or system optimum, respectively.  It was understood that these two flows are unfortunately often not the same, leading to tension between the two different objectives.  Remarkably, the social optimum could be interpreted as an equilibrium with respect to modified travel time functions, that could in turn be interpreted as the original travel time functions plus tolls. 

Consequently, the theory of congestion games developed a mechanism design approach to help users routing along minimum cost paths reach a social optimum through a set of optimal tolls that would be added to (all) network edges.  Later, through the works of Bergendorff et al.~\cite{bergendorff1997congestion} and Hearn \& Ramana~\cite{hearn1998solving}, it was understood that the set of optimal tolls is not unique and there has been work in diverse branches of literature such as algorithmic game theory, operations research and transportation on trying to limit the toll cost paid by users by limiting the number of tolls placed on edges.


\paragraph{Related Work}
The natural question of what is the minimum number of edges that one needs to place tolls on so as to lead selfish users to a social optimum, was first raised by Hearn and Ramana~\cite{hearn1998solving}. The problem was introduced as the minimum tollbooth ($\mctp$) problem and was formulated as a mixed integer linear program. 
This initiated a series of works which led to new heuristics for the problem.  One heuristic approach is based on genetic algorithms \cite{harwood2005genetic,bai2008evolutionary,buriol2010biased}. In 2009, a combinatorial benders cut based heuristics was proposed by Bai and Rubin~\cite{bai2009combinatorial}. The following year, Bai et al. proposed another heuristic algorithm based on LP relaxation using a dynamic slope scaling method \cite{bai2010heuristic}. More recently, Stefanello et al. \cite{stefanello2013minimization} have approached the problem with a modified genetic algorithm technique. 
 
The first step in understanding the computational complexity of the problem was by Bai et al.~\cite{bai2010heuristic} who proved that $\mctp$ in multi commodity networks is NP-hard via a reduction from the minimum cardinality multiway cut problem~\cite{garey2002computers}. 
In a related direction,  
Harks et al.~\cite{harks2008computing} addressed the problem of inducing a predetermined flow, not necessarily the social optimum, as the Wardrop equilibrium, and showed that this problem is APX-hard, via a reduction from length bounded edge cuts \cite{baier2010length}.   
Clearly, $\mctp$ is a special case of that problem and it can be deduced that the hardness results of Harks et al.~\cite{harks2008computing} do not carry forward to the $\mctp$ problem. It should be noted that a related, but distinct, problem has been studied in the algorithmic game theory literature, where one constraints the set of edges on which to impose tolls and seeks tolls that induce the best possible flow under that constraint \cite{hoefer2008taxing,bonifaci2011efficiency}.
 
The latest work stalls at this point leaving open both the question of whether approximations for multi commodity networks are possible,  and what the hardness of the problem is for single commodity networks or for any meaningful subclass of such networks. 
 

\paragraph{Our contribution}
In this work, we make progress on this difficult problem by deepening our understanding on what can and cannot be computed in polynomial time. In particular, we make progress in both the negative and positive directions by providing NP-hardness and hardness of approximation results for the single commodity network, and a polynomial-time exact algorithm for computing the minimum cardinality tolls on series-parallel graphs.

Specifically, we show in Theorem \ref{thm:hardmctp} that  minimum tollbooth problem for single commodity networks and linear latencies is hard to approximate to within a factor of $1.1377$, presenting the first hardness of approximation result for the $\mctp$ problem. 

Further, motivated by the observation that removing or blocking an edge in the network bears much less cost compared to the overhead of toll placement, we ask: if all unused edges under the social optimum are removed, can we solve $\mctp$ efficiently? The NP-hardness result presented in Theorem \ref{thm:hardmctp-ue} for $\mctp$ in single commodity networks with only used edges, settles it negatively, yet the absence of a hardness of approximation result creates the possibility of a polynomial time approximation scheme upon future investigation.

Observing that the Braess structure is an integral part of both NP-hardness proofs we seek whether positive progress is possible in the problem in series-parallel graphs. We propose an exact algorithm for series-parallel graphs with $\mathcal{O}(m^3)$ runtime,  $m$ being the number of edges. Our algorithm provably (see Theorem \ref{thm:correct}) solves the $\mctp$ problem in series-parallel graphs, giving the first exact algorithm for $\mctp$ on an important sub-class of graphs.

\section{  Preliminaries and Problem Definition} \label{sec:prob}
We are given a directed graph $G(V, E)$ with edge delay or latency functions $(\ell_e)_{e \in E}$ and demand $r$ that needs to be routed between a source $s$ and a sink $t$.  We shall abbreviate an instance of the problem by the 
tuple $\G = (G(V, E), (\ell_e)_{e \in E}, r)$. For simplicity, we usually omit the latency functions, and refer to the instance as $(G, r)$. 
The function $\ell_e : \reals_{\geq 0} \rightarrow \reals_{\geq 0}$ is a non-decreasing cost function
associated with each edge $e$. 
Denote the (non-empty) set of simple $s-t$ paths in $G$ by $\paths$. 

\smallskip\noindent{\bf Flows.}
Given an instance $(G, r)$, a (feasible) \emph{flow} $f$ is a non-negative vector indexed by the set of feasible $s-t$ paths $\paths$ such that $\sum_{p \in \paths} f_p = r$. For a flow $f$, let $f_e = \sum_{p: e \in p} f_p$ be the amount of flow that $f$ routes on each edge $e$. 
An edge $e$ is used by flow $f$ if $f_e > 0$, and a path $p$ is used by flow $f$ if it has strictly positive flow on all of its edges, namely $\min_{e \in p} \{ f_e \} > 0$. 
Given a flow $f$, the cost of each edge $e$ is $\ell_e(f_e)$ and the cost of path $p$ is $\ell_p(f) = \sum_{e \in p} \ell_e(f_e)$. 
%

\smallskip\noindent{\bf Nash Flow.}
A flow $f$ is a \emph{Nash (equilibrium) flow}, if it routes all traffic on minimum latency paths. Formally, $f$ is a Nash flow if for every path $p \in \paths$ with $f_p > 0$, and every path $p' \in \paths$, $\ell_p(f) \leq \ell_{p'}(f)$. 
Every instance $(G, r)$ admits at least one Nash flow, and the players' latency is the same for all Nash flows %
(see e.g., \cite{RouB05}).


\smallskip\noindent{\bf Social Cost and Optimal Flow.}
The \emph{Social Cost} of a flow $f$, denoted $C(f)$, is the total latency
$
 C(f) = \sum_{p \in \paths} f_p \ell_p(f) = \sum_{e \in E} f_e \ell_e(f_e)
$\,.
The \emph{optimal} flow of an instance $(G, r)$, denoted $o$, minimizes the total latency among all feasible flows. 
%

\smallskip
 In general, the Nash flow may not minimize the social cost. As discussed in the introduction, one can improve the social cost at equilibrium by assigning tolls to the edges. \\
 \noindent{\bf Tolls and Tolled Instances.} A set of \emph{tolls} is a vector $\Theta=\{\theta_e\}_{e\in E}$ such that the toll for each edge is nonnegative: $\theta_e\geq0$. We call \emph{size} of $\Theta$ the size of the support of $\Theta$, i.e., the number of edges with strictly positive tolls, $|\{e:\theta_e>0\}|$.  Given an instance $\G = (G(V, E), (\ell_e)_{e \in E}, r)$ and a set of tolls $\Theta$, we denote the tolled instance by $\G^{\theta}= (G(V, E), (\ell_e+\theta_e)_{e \in E}, r)$. For succinctness, we may also denote the tolled instance by $(G^{\theta},r)$. We call a set of tolls,  $\Theta$, \emph{opt-inducing} for an instance  $\G$ if the optimal flow in $\mathcal{G}$  and the Nash flow in $\G^{\theta}$  coincide. 

\smallskip
Opt-inducing tolls need not be unique. Consequently, a natural problem is to find a set of optimal tolls of minimum size, which is the problem we consider here.

\begin{definition}[Minimum Tollbooth problem ($\mctp$)]\\
 Given instance $\G$ {and an optimal flow $o$}, find an opt-inducing toll vector $\Theta$ such that the support of $\Theta$ is less than or equal to the support of any other opt-inducing toll vector.
 \end{definition}

The following definitions are needed  for section \ref{sec:algo}. \\
%
%
\noindent{\bf Series-Parallel Graphs.}
A directed $s-t$ multi-graph is \emph{series-parallel} if it consists
of a single edge $(s, t)$ or from two series-parallel
graphs with terminals $(s_1, t_1)$ and $(s_2, t_2)$ composed either in
series or in parallel. In a \emph{series composition}, $t_1$ is identified
with $s_2$, $s_1$ becomes $s$, and $t_2$ becomes $t$.  In a \emph{parallel
composition}, $s_1$ is identified with $s_2$ and becomes $s$, and $t_1$ is
identified with $t_2$ and becomes $t$.


\smallskip
A Series-Parallel ($\sepr$) graph $G$ with $n$ nodes and $m$ edges can be efficiently represented using a parse tree decomposition of size $\mathcal{O}(m)$, which can be constructed in time $\mathcal{O}(m)$  due to Valdes et al.~\cite{valdes1979recognition}.

\noindent{\bf Series-Parallel Parse Tree.}
A series-parallel parse tree $T$ is a rooted binary tree representation of a given $\sepr$ graph $G$ that is defined using the following properties:
\begin{enumerate}
\item Each node in the tree $T$ represents a $\sepr$ subgraph $H$ of $G$, with the root node representing the graph $G$. 
\item There are three type of nodes: `series' nodes, `parallel' nodes, which have two children each, and the `leaf' nodes which are childless. 
\item A `series' (`parallel')  node represents the SP graph $H$ formed by the `series combination' (`parallel combination') of its two children $H_1$ and $H_2$.
\item The `leaf' node represents a parallel arc network, namely one with two terminals $s$ and $t$ and multiple edges from $s$ to $t$. 
\end{enumerate}   
 
For convenience, when presenting the algorithm, we allow `leaf' nodes to be multi-edge/parallel-arc networks. This will not change the upper bounds on the time complexity or the size of the parse tree.

\section{Hardness Results for $\mctp$} \label{sec:hardness}

In this section we provide hardness results for $\mctp$. We study two versions of the problem. The first one considers arbitrary instances while the second considers arbitrary instances where the optimal solution uses all edges, i.e. $\forall e\in E:o_e>0$. Recall that the motivation for separately investigating the second version comes as a result of the ability of the network manager to make some links unavailable.

\subsection{Single-commodity network with linear latencies}
We give hardness results on finding and approximating the solution of $\mctp$ in general instances with linear latencies. In Theorem \ref{thm:hardmctp} we give an inapproximability result by a reduction from a Vertex Cover related NP-hard problem and as a corollary (Corollary \ref{cor:mctpnphard}) we get the NP-hardness of $\mctp$ on single commodity networks with linear latencies. The construction of the network for the reduction is inspired by the NP-hardness proof of the length bounded cuts problem in \cite{baier2010length}.

\begin{theorem}
For  instances with linear latencies, it is NP-hard to approximate the  solution of $\mctp$ by a factor of less than $1.1377$. 
\label{thm:hardmctp}
\end{theorem}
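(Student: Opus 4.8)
The plan is to reduce from a hardness-of-approximation version of minimum vertex cover. Recall (Dinur--Safra / Austrin--Khot--Safra type results) that minimum vertex cover is NP-hard to approximate within some constant $\alpha > 1$; the specific constant $1.1377$ suggests we borrow the bound corresponding to the $\sqrt{2}$-type inapproximability and propagate it through the reduction with a multiplicative factor that is controlled (i.e.\ essentially $1$), so that an $\mctp$-approximation below $1.1377$ would yield a vertex-cover approximation below the known threshold. Concretely, given a graph $H=(V_H,E_H)$ as a vertex cover instance, I would build a single-commodity routing instance $\G=(G,r)$ with linear latencies in which each vertex $v\in V_H$ corresponds to a gadget edge (or a small constant-size gadget), each edge of $H$ corresponds to a short $s$--$t$ path forcing a ``Braess-like'' conflict, and the optimal flow $o$ is arranged so that the set of edges that must receive a positive toll in any opt-inducing $\Theta$ is exactly (the image of) a vertex cover of $H$, up to an additive constant number of unavoidable toll edges. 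As the excerpt hints, the construction mirrors the length-bounded-cuts hardness of \cite{baier2010length}: each $H$-edge path has bounded length, and ``cutting'' such a path in the toll sense means tolling one of its two endpoint-gadget edges.

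The key steps, in order, are: (1) fix the reduction graph $G$, the demand $r$, and explicit linear latency functions, and compute the optimal flow $o$ in closed form (it will split evenly or in some fixed rational proportions across the parallel structures); (2) characterize opt-inducing toll vectors via the standard complementary-slackness / LP-duality characterization of Bergendorff--Hearn--Ramana, i.e.\ $\Theta$ is opt-inducing iff the reduced latencies $\ell_e(o_e)+\theta_e$ make every $o$-used path a shortest path and every unused path no shorter --- this turns ``opt-inducing with small support'' into a covering-type combinatorial condition; (3) prove the forward direction: from a vertex cover $S$ of $H$ produce an opt-inducing $\Theta$ with $|\mathrm{supp}(\Theta)| = |S| + c$ for an explicit constant $c$ (toll the gadget edge of each $v\in S$ by an appropriate amount, plus the fixed unavoidable tolls); (4) prove the converse: from any opt-inducing $\Theta$ extract a vertex cover of $H$ of size $|\mathrm{supp}(\Theta)| - c$, by arguing that for each $H$-edge the conflict it induces cannot be neutralized unless at least one of its two endpoint gadget edges is tolled; (5) do the approximation bookkeeping: if $\mathrm{OPT}_{\mctp} = \mathrm{OPT}_{\mathrm{VC}} + c$ and the additive $c$ and the sizes are polynomially related / can be amplified (by taking many disjoint copies of $H$ so the additive term is negligible), then a $\rho$-approximation to $\mctp$ with $\rho < 1.1377$ yields a vertex-cover approximation beating the known NP-hardness threshold, a contradiction.

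The main obstacle I expect is step (4), the converse/soundness direction: showing that \emph{every} opt-inducing toll vector --- not just ``natural'' ones supported on gadget edges --- must, after removing a bounded number of structural edges, project onto a genuine vertex cover. Tolls can in principle be placed on interior edges of the path gadgets, on ``shared backbone'' edges, or spread in ways that indirectly fix several conflicts at once, so the gadget must be engineered so that no small set of non-endpoint tolls can resolve a conflict (this is exactly where careful choice of the linear latency slopes/intercepts and of $r$ matters, and where the Braess substructure is essential). A secondary technical point is making the additive constant $c$ truly negligible relative to $\mathrm{OPT}_{\mathrm{VC}}$: the cleanest route is the disjoint-copies amplification, where we take $k$ vertex-disjoint copies of $H$ sharing only $s$ and $t$ (or chained appropriately), so $\mathrm{OPT}_{\mctp}$ scales like $k\cdot\mathrm{OPT}_{\mathrm{VC}}$ while the unavoidable overhead stays $O(1)$ per copy or $O(1)$ total, letting the ratio converge to the vertex-cover inapproximability factor. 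Finally I would verify that all latencies remain linear and all numbers polynomial-size, so the reduction is genuinely polynomial-time.
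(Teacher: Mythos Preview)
Your high-level strategy---reduce from the Dinur--Safra hardness of vertex cover via per-vertex gadgets connected by $H$-edge cross-links, then read off a cover from the support of any opt-inducing toll---is exactly the paper's approach, and your identification of step~(4) (soundness against tolls on ``wrong'' edges) as the crux is accurate.

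Where your plan diverges is in the arithmetic, and this matters because the target constant $1.1377$ is not arbitrary. You aim for $|\mathrm{supp}(\Theta)| = |S| + c$ with $c$ an absolute constant, then amplify by disjoint copies so the overhead is negligible. If that worked, you would inherit the full vertex-cover inapproximability factor ($\approx 1.3606$), which is \emph{stronger} than what the theorem claims. The paper's actual relationship is $|\mathrm{supp}(\Theta)| = n_{vc} + |S|$: every vertex gadget is a little Braess graph that forces at least one toll regardless of the cover (two internal paths of unequal length must be equalized), and a vertex in the cover forces exactly two. This linear-in-$n_{vc}$ overhead is precisely what dilutes the Dinur--Safra gap to
\[
\frac{n_{vc} + n_{vc}(1-4p^3+3p^4-\epsilon)}{n_{vc} + n_{vc}(1-p+\epsilon)} \;=\; \frac{2-4p^3+3p^4-\epsilon}{2-p+\epsilon} \;>\; 1.1377,
\qquad p=\tfrac{3-\sqrt{5}}{2}.
\]
Your amplification trick does not help here: taking $k$ disjoint copies multiplies both the overhead and $|S|$ by $k$, leaving the ratio unchanged. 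So either you have a genuinely better gadget with constant global overhead (which would be a stronger result and is worth checking carefully---but your sketch does not supply one), or your bookkeeping should be redone with a per-vertex additive term, at which point the $1.1377$ drops out and no amplification is needed.

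For step~(4), the paper's mechanism is cleaner than a full case analysis over ``non-gadget'' toll placements: define a tolled edge to \emph{touch} gadget $G_i$ if it is incident to a vertex of $G_i$, show (i) every $G_i$ is touched at least once, (ii) any $G_i$ touched exactly once must have that toll on a specific interior edge, and (iii) for every $H$-edge $\{v_i,v_j\}$ the two gadgets together are touched at least three times. Then $\{v_i:\ G_i\text{ touched }\geq 2\text{ times}\}$ is a vertex cover of size at most $|\mathrm{supp}(\Theta)|-n_{vc}$. This sidesteps the worry about tolls ``spread in ways that indirectly fix several conflicts at once.''
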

\begin{proof}

The proof is by a reduction from an NP-hard variant of Vertex Cover ($\vc$) due to  
Dinur and Safra \cite{dinur2005hardness}. 
Reminder: a Vertex Cover of  an undirected graph $G(V,E)$ is a set $S\subseteq V$ such that  $\forall \{u,v\}\in E:S\cap \{u,v\} \neq \emptyset$.

Given an instance $\mathcal{V}$ of $\vc$ we are going to construct an instance $\G$ of $\mctp$  which will give a one-to-one correspondence (Lemma \ref{lemm:hardmctp:opteq}) between  Vertex Covers in $\mathcal{V}$  and opt-inducing tolls in  $\G$. The inapproximability result will follow from that correspondence and an inapproximability result concerning Vertex Cover by \cite{dinur2005hardness}. We note that we will not directly construct the instance of $\mctp$. First, we will construct a graph with edge costs that are assumed to be the costs of the edges (used or unused) under the optimal solution  and then we are going to assign linear cost functions and demand that makes the edges under the optimal solution to have costs equal to the predefined costs.


We proceed  with the construction. Given an instance $G_{vc}(V_{vc},E_{vc})$ of $\vc$, with $n_{vc}$ vertices and $m_{vc}$ edges, we construct a directed single commodity network 
$G(V,E)$ with source $s$ and sink $t$ as follows:
\begin{enumerate}
\item For every vertex $v_i\in V_{vc}$ create gadget graph $G_i(V_i,E_i)$, with $V_i=\{a_i, b_i, c_i, d_i\}$ and $E_i=\{(a_i,b_i), (b_i,c_i),$ $ (c_i,d_i), (a_i,d_i)\}$, and assign costs equal to $1$ for edges $e_{1,i}=(a_i,b_i)$ and $e_{3,i}=(c_i,d_i)$, $0$ for edge $e_{2,i}=(b_i,c_i)$, and 3 for edge $e_{4,i}=(a_i,d_i)$. All edges $e_{1,i}$ ,$e_{2,i}$,$e_{3,i}$ and $e_{4,i}$ are assumed to be used.

\item For each edge $e_k=\{v_i,v_j\}\in E_{vc}$ add edges $g_{1,k}=(b_i,c_j)$ and $g_{2,k}=(b_j,c_i)$ with cost $0.5$ each. Edges $g_{1,k}$ and $g_{2,k}$ are assumed to be unused.

\item Add source vertex $s$ and sink vertex $t$ and for all $v_i\in V_{vc}$ add edges $s_{1,i}=(s,a_i)$ and $t_{1,i}=(d_i,t)$ with $0$ cost, and edges $s_{2,i}=(s,b_i)$ and $t_{2,i}=(c_i,t)$ with cost equal to $1.5$. Edges $s_{1,i}$ and $t_{1,i}$ are assumed to be used and edges $s_{2,i}$ and $t_{2,i}$ are assumed to be unused.
\end{enumerate}

     
The construction is shown in figure \ref{fig:hardmctp} where the solid lines represent used edges and dotted lines represent unused edges. The whole network consists of $(2+4n_{vc})$ nodes and $(8n_{vc}+2m_{vc})$ edges,  therefore, it can be constructed  in polynomial time, given $G_{vc}$.

\begin{figure}
\centering
\includegraphics[width=0.7\linewidth]{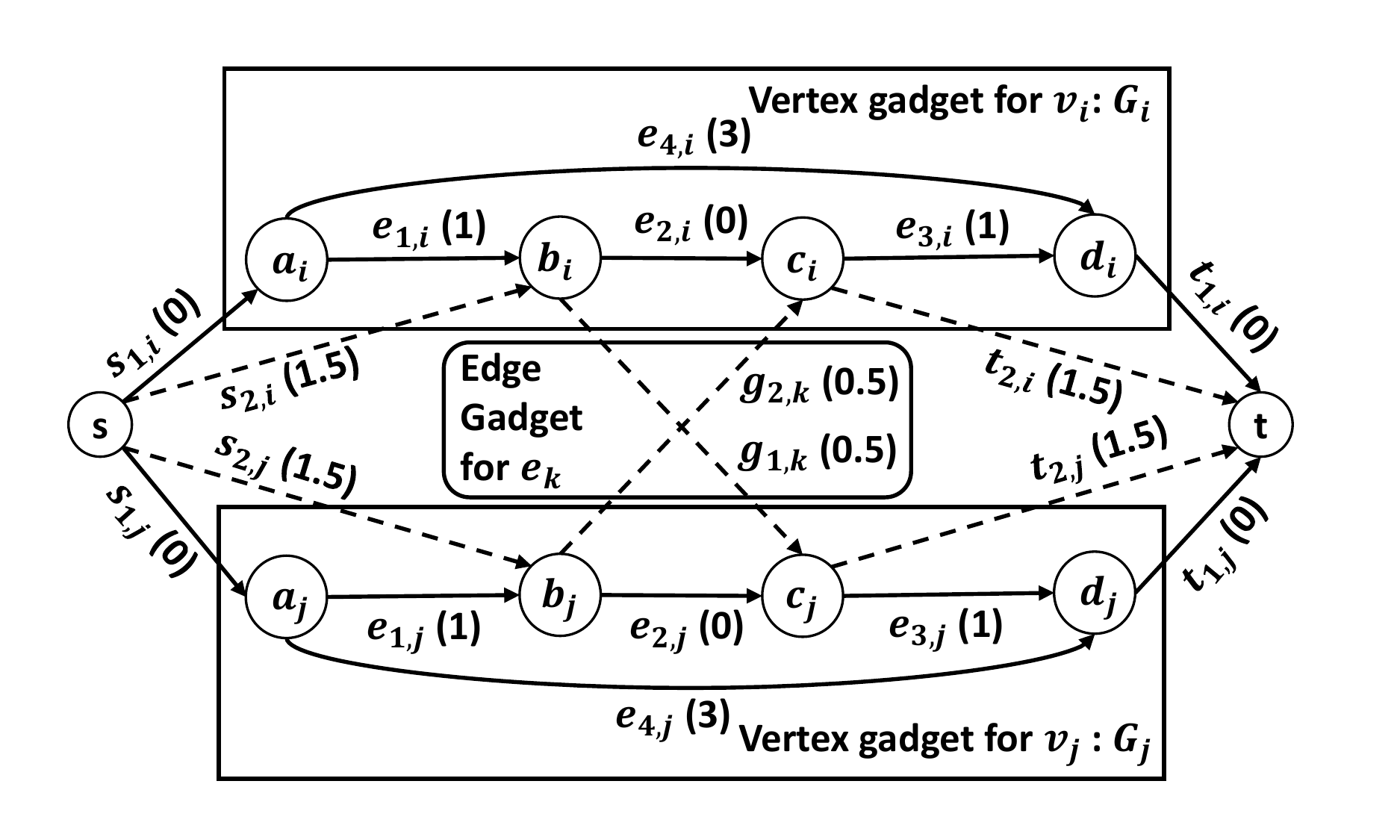}
\caption{Gadgets for the reduction from $\vc$ to $\mctp$.  The pair of symbols on  each  edge corresponds to the name and the cost of the edge respectively. Solid lines represent used edges and dotted lines represent unused edges.}
\label{fig:hardmctp}
\end{figure}

We go on to prove the one-to-one correspondence lemma. 

\begin{lemma} \label{lemm:hardmctp:opteq}
(I) If there is a Vertex Cover in $G_{vc}$ with cardinality $x$, then there are opt-inducing  tolls for  $G$ of size $n_{vc}+x$. \\
(II) If there are opt-inducing  tolls for $G$ of size $n_{vc}+x$, then there is a Vertex Cover in $G_{vc}$ with cardinality $x$.
\end{lemma}

\begin{proof}

We start with some observations:\\
(a) Any solution of $\mctp$ for any $i$ should put a toll on at least one of $e_{1,i}$, $e_{2,i}$ or $e_{3,i}$, as paths $a_i-d_i$ and $a_i-b_i-c_i-d_i$ are both assumed to be used and thus should have equal costs at equilibrium. Thus at least $n_{vc}$ edges are needed to be tolled.\\
(b) For any solution of $\mctp$, if in $G_i$ edge $e_{2,i}$ gets no toll then there must be at least two other edges adjacent to some vertex of $G_i$ that get a toll. That is because if $e_{1,i}$ gets a  toll $< 1$ then $e_{3,i}$ should get a toll in order for the paths inside $G_i$ to be equal, and if $e_{1,i}$ gets a toll $>0.5$ edge $s_{2,i}$ should get a toll in order for path $s-a_i-b_i$ to cost at most as the unused path $s-b_i$.\\
(c) Under any solution of $\mctp$,  for any edge $(v_i,v_j)\in E_{vc}$, at least $3$ edges that are adjacent to some vertex of $G_i$ or $G_j$ should get a toll. To see this first note that the unused $s-t$ paths $p_{ij}=s-a_i-b_i-c_j-d_j-t$ and $p_{ji}=s-a_j-b_j-c_i-d_i-t$ have cost equal to $2.5$ while there are used $s-t$ paths, e.g. $s-a_i-d_i-t$, with cost equal to $3$. Thus at least one  edge in  each of $p_{ij}$ and $p_{ji}$ should get a toll. Now observe that by observation (b), putting tolls on exactly one edge in each of $G_i$ and $G_j$ (one in each is necessary by observation (a)) implies that these edges will be 
$e_{2,i}$ and $e_{2,j}$ (by observation (b)) and these edges  belong to neither in $p_{ij}$ nor in $p_{ji}$, which further implies that more than $2$ edges will be needed to get a toll.\\

(I): Assume that there is a Vertex Cover $A$ of $G_{vc}$ that has cardinality $|A|=x$. We are going to give an opt-inducing toll placement for $G$ of size $n_{vc}+x$. 

For each $i:v_i\in A$ assign toll equal to $0.5$ to $e_{1,i}$ and $e_{3,i}$ and for each $i:v_i\in V\setminus A$ assign toll equal to $1$ to edge $e_{2,i}$. In this way: (i) All paths inside $G_i$ have cost equal to $3$ and thus all used $s-t$ paths have cost equal to  $3$. (ii) Paths $s-a_i-b_i$ and $c_i-d_i-t$ have cost less than that of paths $s-b_i$ and $c_i-t$ respectively, as needed. Finally (iii) for each edge $(v_i,v_j)\in E_{vc}$, used path $s-a_i-b_i-c_i$ has cost less or equal to that of unused path $s-a_j-b_j-c_i$, used path $s-a_j-b_j-c_j$ has cost less or equal to that of unused path $s-a_i-b_i-c_j$, used path $b_j-c_j-d_j-t$ has cost less or equal to that of unused path $b_i-c_j-d_j-t$, and used path $b_i-c_i-d_i-t$ has cost less or equal to that of unused path $b_j-c_i-d_i-t$. The latter holds because  either  $v_i\in A$ or $v_j\in A$, and this implies that a toll equal to $0.5$ has been assigned to $e_{1,i}$ and $e_{3,i}$ or to $e_{1,j}$ and $e_{3,j}$ (or both). 

The above  imply that we indeed have an equilibrium. The edges that got toll by the described assignment have cardinality $2|A|+|V\setminus A|=n_{vc}+x$. 

(II)  Assume that for $G$ we have  opt-inducing toll vector $\Theta=\{\tau_e\}_{e\in E}$ of size $n_{vc}+x$ ($x>0$, recall observation (a)), and let $B$ be the set of edges of $G$ that get positive toll, i.e. $B=\{e\in E: \tau_e>0\}$. We are going to find a Vertex Cover of cardinality $x$. For this we will need the following definition. We say that edge $e$ \emph{touches} $G_i$ if one of its endpoints is adjacent to a vertex in $G_i$.

By this definition, let $A$ be the set of all vertices $v_i$ of $V_{vc}$ that have their corresponding $G_i$ being touched by at least two edges belonging in $B$, i.e. $A=\{ v_i \in V_{vc}:\exists e_1,e_2\in B \,\, that \,\, touch \,\, G_i \}$. We will prove that $A$ is a Vertex Cover of $G_{vc}$ with the desired cardinality.

Assume that there is an edge $\{v_i,v_j\}\in E_{vc}$ that is not covered by $A$. Not covered by $A$ implies, by definition of $A$ and observation (a), that each of $G_i$ and $G_j$ are touched by exactly one edge in $B$. This is a contradiction as, by observation (c), $G_i$ and $G_j$ should have at least $3$ edges touching them.

To bound the cardinality of $A$ we first note that by observation (a) all $G_i$'s are touched by at least $1$ edge. Thus, the set of $G_i$'s that are touched by more than $1$ edge have cardinality $\leq |B|-n_{vc}$ which implies $|A|\leq |B|-n_{vc}=x$.  Now $|A|\leq x$  implies that there is a Vertex Cover with $x$ vertices (if $|A|<x$ just add extra vertices to reach equality). \qed
 
\end{proof}

Statement (I) in the above lemma directly implies that if the minimum Vertex Cover of $G_{vc}$ has cardinality $x$ then the optimal solution of the $\mctp$ instance has size at most $n_{vc}+x$.

From the proof of Theorem $1.1$ in \cite{dinur2005hardness} we know that there exist instances $G_{vc}$ where it is NP-hard to distinguish between the case where we can find a Vertex Cover of size 
$n_{vc}\cdot(1 - p + \epsilon)$, and the case where any vertex cover has size at least 
$n_{vc}\cdot(1 - 4p^3+3p^4 - \epsilon)$, for any positive $\epsilon$ and $p = (3 - \sqrt{5})/2$. We additionally know that the existence of a Vertex Cover with cardinality in between the gap implies the  existence of a Vertex Cover of cardinality $n_{vc}\cdot(1 - p + \epsilon)$.\footnote{the instance they create will  have either  a Vertex Cover of cardinality $n_{vc}\cdot(1 - p + \epsilon)$ or all Vertex Covers with cardinality $\geq n_{vc}\cdot(1 - 4p^3+3p^4 - \epsilon)$ } 
%
%

{Assuming that we reduce from such an instance of $\vc$}, the {above} result implies that it is NP-hard to approximate  $\mctp$  within a factor of 
$1.1377<\frac{2 - 4p^3+3p^4 - \epsilon}{2 - p + \epsilon}$ (we chose an $\epsilon$ for inequality to hold).
 To reach a contradiction assume the contrary, i.e. there exists a $\beta$-approximation algorithm $\algo$ for $\mctp$, where $\beta \mathbf{\leq} 1.1377<\frac{2 - 4p^3+3p^4 - \epsilon}{2 - p + \epsilon}$. 
By Lemma \ref{lemm:hardmctp:opteq} statement (I), if there exist a Vertex Cover of cardinality $\hat{x}=n_{vc}\cdot(1 - p + \epsilon)$ in $G_{vc}$, then the cardinality in an optimal solution to $\mctp$ on the corresponding instance is 
 $OPT \leq n_{vc}+\hat{x}$. Further, $Algo$ produces an opt-inducing tolls with size $n_{vc}+y$, from which we can get a Vertex Cover of cardinality $y$ in the same way as we did inside the proof of statement (II) of Lemma \ref{lemm:hardmctp:opteq}. Then by the approximation bounds and using $\hat{x}=n_{vc}\cdot(1 - p + \epsilon)$ we get
 $$\frac{n_{vc}+y}{n_{vc}+\hat{x}} \leq \frac{n_{vc}+y}{OPT}\leq\beta < \frac{2 - 4p^3+3p^4 - \epsilon}{2 - p + \epsilon}\Rightarrow 1+\frac{y}{n_{vc}}<2 - 4p^3+3p^4 - \epsilon\Rightarrow y< (1 - 4p^3+3p^4 - \epsilon)n_{vc}$$
The last inequality would answer the question whether there exist a Vertex Cover with size $n_{vc}\cdot(1 - p + \epsilon)$,  as we started from an instance for which we additionally know that the existence of a Vertex Cover with cardinality $y< (1 - 4p^3+3p^4 - \epsilon)n_{vc}$ implies the  existence of a Vertex Cover of cardinality $n_{vc}\cdot(1 - p + \epsilon)$.

What is left for concluding the proof is to define the linear cost functions and the demand so that at optimal solution all edges have costs equal to the ones defined above.

Define the demand to be $r=2n_{vc}$ and assign: for every $i$ the cost functions $\ell_0(x)=0$ to edges $s_{1,i}$, $t_{1,i}$ and $e_{2,i}$, the cost function $\ell_1(x)=\frac{1}{2}x+\frac{1}{2}$ to edges $e_{1,i}$ and $e_{3,i}$, the cost function $\ell_2(x)=1.5$ to edges $s_{2,i}$ and $t_{2,i}$, and the cost function  $\ell_3(x)=3$ to edge $e_{4,i}$, and for each $k$, the cost function  $\ell_4(x)=0.5$  to edges $g_{1,k}$ and $g_{2,k}$.
The optimal solution will assign for each $G_i$ one unit of flow to path $s-a_i-b_i-c_i-d_i-t$ and one unit of flow to $s-a_i-d_i-t$. This makes the costs of the edges to be as needed, as the only non constant cost is $\ell_1$ and $\ell_1(1)=1$.

To verify that this is indeed an optimal flow, one can assign to each edge $e$ instead of its cost function, say $\ell_e(x)$,  the cost function $\ell_e(x)+x\ell_e'(x)$. The optimal solution in the initial instance should be an equilibrium for the instance with the pre-described change in the cost functions (see e.g. \cite{roughgarden2003price}). This will hold here as under the optimal flow and with respect to the new cost functions the only edges changing cost will be $e_{1,i}$ and $e_{3,i}$, for each $i$, and that new cost will be $1.5$ ($\ell_1(1)+1\ell'_1(1)=1.5$).\qed

\end{proof}

Consequently, we obtain the following corollary.

\begin{corollary}\label{cor:mctpnphard}
For  single commodity networks with linear latencies, $\mctp$ is NP-hard.
\end{corollary}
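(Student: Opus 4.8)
The plan is to obtain the corollary as an immediate consequence of Theorem \ref{thm:hardmctp}. An exact polynomial-time algorithm for $\mctp$ is in particular a $1$-approximation, and $1 < 1.1377$, so the existence of such an algorithm would contradict Theorem \ref{thm:hardmctp} unless $\mt{P}=\mt{NP}$. Hence $\mctp$ is NP-hard already on single commodity networks with linear latencies, since that is the class of instances produced by the reduction in Theorem \ref{thm:hardmctp}. This one-line argument is really all that is needed; I do not expect any genuine obstacle.

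If one prefers a self-contained reduction rather than invoking the inapproximability statement, I would instead unwind Lemma \ref{lemm:hardmctp:opteq} directly. The construction of $G$ from a Vertex Cover instance $G_{vc}$ is polynomial-time, and the cost functions and demand are assigned so that the intended flow is optimal (as verified at the end of the proof of Theorem \ref{thm:hardmctp}). Statement (I) of Lemma \ref{lemm:hardmctp:opteq}, applied to a minimum vertex cover, shows $\mt{OPT}_{\mctp}(G) \le n_{vc} + \tau(G_{vc})$, where $\tau(G_{vc})$ is the minimum vertex cover size; statement (II), applied to an optimal toll vector, shows $\tau(G_{vc}) \le \mt{OPT}_{\mctp}(G) - n_{vc}$. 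Together these give $\mt{OPT}_{\mctp}(G) = n_{vc} + \tau(G_{vc})$, and moreover the proof of statement (II) extracts, in polynomial time, an explicit vertex cover of size $\mt{OPT}_{\mctp}(G) - n_{vc}$ from any opt-inducing toll vector of that size. Thus a polynomial-time exact solver for $\mctp$ would yield a polynomial-time exact solver for minimum Vertex Cover, which is NP-hard; hence so is $\mctp$. The only point requiring (minor) care is that the equality $\mt{OPT}_{\mctp}(G) = n_{vc} + \tau(G_{vc})$ — not merely the gap version used for inapproximability — follows from combining both directions of the lemma, which is exactly the content of the remark immediately following Lemma \ref{lemm:hardmctp:opteq}.
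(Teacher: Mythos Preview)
Your proposal is correct. The paper's own proof is your second route: it invokes Lemma~\ref{lemm:hardmctp:opteq} to conclude that solving $\mctp$ on $G$ solves $\vc$ on $G_{vc}$ and vice versa, which is exactly the equality $\mt{OPT}_{\mctp}(G)=n_{vc}+\tau(G_{vc})$ you spell out. Your first route---deducing NP-hardness directly from Theorem~\ref{thm:hardmctp} because an exact algorithm is a $1$-approximation---is also valid and arguably even quicker, but the paper chooses to appeal to the lemma rather than the full inapproximability statement.
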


\begin{proof}
By following the same reduction, by Lemma \ref{lemm:hardmctp:opteq} we get that solving $\mctp$  in $G$ gives  the solution to $\vc$ in $G_{vc}$ and vice versa. Thus, $\mctp$ is NP-hard.\qed
\end{proof}

\subsection{Single-commodity network with linear latencies and all edges under use}
In this section we turn to study $\mctp$ for instances where all edges are used by the optimal solution. 
Note that this case is not captured by Theorem \ref{thm:hardmctp}, as in the reduction given for proving the theorem, 
the existence of unused paths in  network $G$  was crucially exploited. 
Nevertheless, $\mctp$ remains $NP$ hard  for this case. 

\begin{theorem}
For  instances with linear latencies, it is NP-hard to solve $\mctp$ even if all edges are used by the optimal solution.
\label{thm:hardmctp-ue}
\end{theorem}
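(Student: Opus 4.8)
The plan is to reduce from PARTITION, as announced in the introduction. Given a PARTITION instance with positive integers $a_1,\dots,a_n$ summing to $2S$, we want to build a single-commodity instance $\G=(G,r)$ with linear latencies in which the optimal flow uses every edge, and such that there is an opt-inducing toll vector of some target size $k$ if and only if the $a_i$ can be split into two sets each summing to $S$. The key mechanism we exploit is the one made visible by observation (a) in the proof of Theorem \ref{thm:hardmctp}: whenever two (or more) distinct used $s$--$t$ paths share the same origin and destination of a ``Braess-like'' subnetwork, any opt-inducing toll vector must put a toll somewhere in that subnetwork to equalize path costs, and the number of tolled edges needed depends on how the flow is routed — which in the all-edges-used setting is forced. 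So we want a gadget where the optimal routing (and hence the toll count) is governed by a quantity that behaves like a subset sum.

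The construction I would attempt: create $n$ parallel ``item gadgets,'' one per $a_i$, each being a small series-parallel Braess configuration (two parallel routes with a cross edge, all forced to carry flow by choosing latencies and demand appropriately — e.g. constant-plus-linear latencies tuned as in the Theorem \ref{thm:hardmctp} construction so that $\ell_1(x)=\tfrac12 x+\tfrac12$ type edges sit at cost $1$ under the optimum). The gadgets are then wired in series through two ``aggregator'' edges (or two parallel aggregator sub-networks) whose latencies are affine in the total flow they receive; the global demand $r$ is split among the gadgets at the social optimum in proportion to the $a_i$, so that the flow through a distinguished edge of gadget $i$ equals (a scaling of) $a_i$. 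The toll required at gadget $i$ to restore equilibrium takes one of two forms — either a single toll on the ``middle'' edge, or tolls on two ``side'' edges — and which one suffices depends on a threshold comparison between the flow in that gadget and the flow on a shared edge elsewhere. By arranging two such shared ``bottleneck'' edges (one on each side of a parallel split) whose flows are $S$ and $S$ respectively only when the items are balanced, the minimum toll size drops to the target $k$ precisely on YES-instances of PARTITION, and is strictly larger otherwise because at least one extra side-edge toll is forced.

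The steps, in order: (1) fix the gadget and the latency functions, compute the unique social optimum in closed form, and verify $o_e>0$ for every edge — here I would use the standard trick from the Theorem \ref{thm:hardmctp} proof of replacing $\ell_e(x)$ by $\ell_e(x)+x\ell_e'(x)$ and checking the optimum is the equilibrium of the modified instance; (2) characterize, gadget by gadget, exactly which toll patterns are opt-inducing, in analogy with observations (a), (b), (c), obtaining a per-gadget cost of either $c_0$ or $c_0+1$ tollbooths according to a linear inequality in the gadget flows; (3) show the total minimum toll size equals $k:=$ (base cost) if and only if the partition is balanced, by summing the per-gadget costs and using that the ``cheap'' option is available at every gadget simultaneously exactly on a balanced split; (4) conclude NP-hardness since PARTITION is NP-complete and the reduction is clearly polynomial.

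The main obstacle I expect is step (2): getting the gadget's equilibrium/toll analysis to be genuinely \emph{two-valued} and cleanly threshold-controlled, so that the only way to save tollbooths globally is the balanced partition, with no cheaper ``mixed'' strategy that tolls one side edge in some gadgets and the middle edge in others to beat $k$. This is exactly the place where the argument in Theorem \ref{thm:hardmctp} needed the delicate observations (b) and (c); here I must additionally ensure the shared bottleneck edges couple the gadgets tightly enough that local savings cannot be traded off against each other, which I expect will force careful choice of the latency coefficients and possibly duplicating certain edges in parallel to rule out fractional loopholes.
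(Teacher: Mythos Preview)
Your instinct to reduce from \parti\ is right, and the paper does the same, but the coupling mechanism you propose does not work. You want parallel item gadgets plus shared ``aggregator'' or ``bottleneck'' edges whose flows equal $S$ and $S$ ``only when the items are balanced,'' with a per-gadget toll count of $c_0$ or $c_0+1$ decided by a flow-threshold comparison. The conceptual gap is that in \mctp\ the optimal flow $o$ is \emph{fixed in advance}: tolls do not reroute anything, they only certify that the given $o$ is an equilibrium. Every edge flow --- including your bottleneck flows --- and every threshold you could compare it to is therefore determined by the instance alone, independently of any choice the toll-setter makes. There is no decision variable on the flow side that can encode the partition, so your step~(3) collapses: whether the cheap option is available at gadget $i$ is settled once you write down the latencies, and has nothing to do with whether a balanced split of the $a_i$ exists. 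Your own worry about ``mixed strategies'' in step~(2) is a symptom of this: the degree of freedom you are trying to control simply is not there.

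The paper's reduction encodes the partition in a different place. The gadgets $G_i$ are chained in \emph{series} (not parallel), and each $G_i$ admits exactly two minimal toll patterns, \emph{both using the same number of tolls} (namely two), but producing different internal $u_i$--$v_i$ path costs: $6\alpha_i$ if one tolls the two outer edges, versus $5\alpha_i$ if one tolls the two parallel middle edges. A single direct used edge $h=(s,t)$ of constant cost $11B$, where $2B=\sum_i\alpha_i$, pins the global equilibrium cost. Hence a $2n$-toll solution exists iff one can choose, for each $i$, between $5\alpha_i$ and $6\alpha_i$ so that the series sum hits $11B$, i.e., iff $\sum_{i\in I_1}\alpha_i=B$ for the set $I_1$ of gadgets taking the $6\alpha_i$ option --- precisely \parti. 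The partition choice thus lives in the \emph{lengths} produced by two equally cheap local toll patterns, and the series topology together with the direct edge $h$ turns the equilibrium condition into an additive target on those lengths. That is the idea your proposal is missing; once you have it, your steps (1), (2) and (4) go through essentially as you outline.
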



\begin{proof}


The proof comes by a reduction from the partition problem ($\parti$) which is well known to be NP-complete (see e.g. \cite{garey2002computers}). $\parti$   is: Given a multiset  $S=\{\alpha_1,\alpha_2,\dots,\alpha_n\}$ of positive integers, decide (YES or NO) whether there exist a partition of $S$ into sets $S_1$ and $S_2$ such that $S_1\cap S_2=\emptyset$ and $\sum_{\alpha_i\in S_1} \alpha_i =\sum_{\alpha_j \in S_2} \alpha_j=\frac{\sum_{i=1}^{n}\alpha_i}{2}$.

 Given an instance of $\parti$ we will construct an instance of $\mctp$ with used edges only and show that getting the  optimal solution for $\mctp$ solves $\parti$. Though, we will not directly construct the instance. First we will construct a graph with edge costs that are assumed to be the costs of the edges under the optimal solution and then we are going to assign linear cost functions and demand that makes the edges under the optimal solution to have costs equal to the predefined costs.  
 For these costs we will prove that  if the answer to $\parti$ is YES, then the  solution to $\mctp$ puts tolls to $2n$ edges and if the answer to $\parti$ is NO then the  solution to $\mctp$ puts tolls to more than $2n$ edges. Note that the tolls that will be put on the edges should make all s-t paths of the $\mctp$ instance having equal costs, as all of them are assumed to be used. 
 
 Next, we construct the graph of the reduction together with the costs of the edges.
Given the multi-set $S=\{\alpha_1,\alpha_2,\dots,\alpha_n\}$  of $\parti$, with $\sum_{i=1}^n \alpha_i=2B$,  construct the $\mctp$ instance graph $G(V,E)$, with source $s$ and sink $t$, in the following way:

\begin{enumerate}
\item For each $i$, construct graph $G_i=(V_i,E_i)$, with $V_i=\{u_i, w_i,x_i,v_i\}$ and 
$E_i=\{(u_i,w_i), (w_i,v_i), (u_i,x_i), (x_i,v_i),$  $ (w_i,x_i), (w_i,x_i)\}$. Edges $a_i=(u_i,w_i)$ and  $b_i=(x_i,v_i)$ have cost equal to $\alpha_i$, edges $c_{1,i}=(w_i,x_i)$ and $c_{2,i}=(w_i,x_i)$ have cost equal to $2\alpha_i$ and edges  $q_i=(w_i,v_i)$ and $g_i=(u_i,x_i)$ have cost equal to $4 \alpha_i$.
\item For $i=1$ to $n-1$ identify $v_i$ with $u_{i+1}$. Let the source vertex be $s=u_1$ and the sink vertex  be $t=v_n$.
\item Add  edge $h=(s,t)$ to connect $s$ and $t$ directly with cost equal to $11B$.
\end{enumerate}


The constructed graph is presented in figure \ref{fig:hardmctpue}. It is consisted of $(3n+1)$ vertices and $6n+1$ edges and thus can be created in polynomial time, given $S$.  

\begin{figure}
\centering
\includegraphics[width=0.9\linewidth]{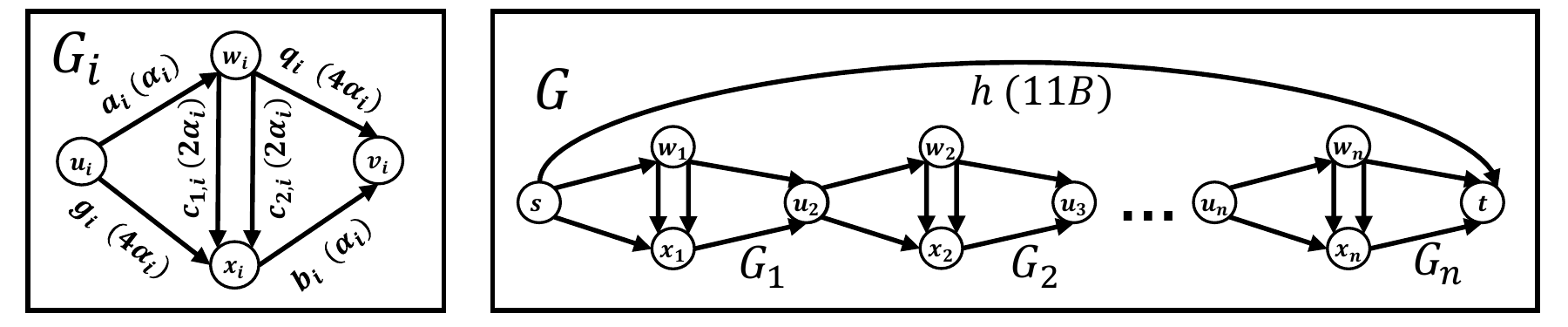}
\caption{The graph for $\mctp$, as it arises from $\parti$. The pair of symbols on  each of $G_i$'s edges correspond to the name and the cost of the edge respectively.}
\label{fig:hardmctpue}
\end{figure}

We establish the one-to-one correspondence between the two problems in the following lemma.

\begin{lemma}
(I) If the answer to $\parti$ on $S$ is YES then the size of opt-inducing tolls for $G$ is equal to $2n$. \\
(II) If the answer to $\parti$ on $S$ is NO then the size of opt-inducing tolls for $G$ is strictly greater than $2n$.
\end{lemma}
\begin{proof}
We now give some useful observations needed for the proof: \\
(a) All $s-t$ paths, through $G_i$'s will cost at most $10B$ (at most $5\alpha_i$ in each $G_i$) and under  any solution, all $s-t$ paths in $G$, should cost at least  $11B$ due to (used) edge $h$.\\
(b)  In a solution of $\mctp$ all  paths should have equal costs. A necessary condition for that is that for every $G_i$ all $u_i-v_i$ paths from $u_i$ to $v_i$ should have equal cost. For this to happen,  tolls on at least $2$ edges for each $G_i$ are needed, as the zig zag $u_i-v_i$ paths (paths through vertical edges) should get cost equal to the parallel $u_i-v_i$ paths (paths avoiding vertical edges).
Therefore, in the optimal solution we need at least $2n$ edges to get tolled. Furthermore, there are only $2$ ways in which we can make the costs in $G_i$ equal  by putting tolls in exactly $2$ edges: we may place toll $\alpha_i$ on both $a_i$ and $b_i$ which makes all the paths from $u_i$ to $v_i$ equal to $6\alpha_i$ or we may place toll $\alpha_i$ on edges $c_{1,i}$ and $c_{2,i}$ which makes all $u_i$ to $v_i$ paths equal to $5\alpha_i$. \\

Going on with the proof, if the answer to $\parti$ is YES then the solution to $\mctp$ is putting tolls to exactly $2n$ edges. To see this first note that by observation (b) above, it is necessary for $2n$ edges  to get a toll. Now, let $S_1$ and $S_2$ be the solution of $\parti$, i.e. $S_1\cap S_2=\emptyset$ and $\sum_{\alpha_i\in S_1} \alpha_i =\sum_{\alpha_j \in S_2} \alpha_j=B$, and let  $I_1,I_2\subset \{1\ldots n\}$ be the corresponding sets of indices such that $\forall i\in I_1: \alpha_i\in S_1$ and  $\forall i\in I_2: \alpha_i\in S_2$. A solution of $\mctp$ that puts a toll in exactly $2n$ edges is the following: for each $i\in I_1$ put a toll $\alpha_i$ on both $a_i$ and $b_i$ and for  each $i\in I_2$ put a toll $\alpha_i$ on both $c_{1,i}$ and $c_{2,i}$. This results to an equilibrium as all  paths in $G_i$ for  $i\in I_1$ will have cost equal to $6\alpha_i$, all paths in $G_i$ for  $i\in I_2$ will have cost equal to $5\alpha_i$ and all $s-t$ paths will have cost equal to $11B$. 
The latter comes by the fact that all $s-t$ paths through the $G_i$'s will have cost $\sum_{i\in I_1}6\alpha_i+\sum_{i\in I_2}5\alpha_i=6B+5B=11B$ and edge $h$ has cost $11B$.

 
If the answer to $\parti$ is NO then the solution to $\mctp$ is putting tolls to more than $2n$ edges. Assuming the contrary we will reach a contradiction. Let the solution of $\mctp$ use $2n$ edges (by observation (b) it cannot use less). This, by observation (b), implies  that in each $G_i$ there are exactly $2$ edges that get a toll, which in turn, again by observation (b), implies that there exist sets $I_1$ and $I_2$ such that all paths in $G_i$ for $i\in I_1$ cost equal to $6\alpha_i$ and all paths in $G_i$ for $i\in I_2$ cost equal to $5\alpha_i$. Additionally, solving the problem by putting tolls on exactly $2n$ edges  implies  that edge $h$ does not get any toll and thus all $s-t$ paths cost equal to $11B$, i.e. the cost of edge $h$. These together  give $11B=\sum_{i\in I_1}6a_i+\sum_{i\in I_2}5a_i=\sum_{i\in I_1}a_i+\sum_{i=1}^n 5a_i= \sum_{i \in I_1} a_i+10B$, which implies  $\sum_{i \in I_1} a_i=B$ and contradicts that we have a NO instance of $\parti$. \qed
 
  
\end{proof}

What is left for concluding the proof is to define the linear cost functions and the demand so that at optimal solution all edges have costs equal to the ones defined above.

Define the demand to be $r=4$ and assign the cost function $\ell_h=11B$ to edge $h$ and for each $i$, the cost function $\ell_i^1(x)=\frac{1}{4}\alpha_i x+\frac{1}{2}\alpha_i$ to edges $a_i$, $b_i$, the cost function $\ell_i^2(x)=\alpha_i x+\frac{3}{2}\alpha_i$ to edges $c_{1,i}$ and $c_{2,i}$,  and the constant cost function $\ell_i^3(x)=4\alpha_i$ to edges $q_i$ and $g_i$. The optimal flow then assigns $1$ unit of flow to edge $h$ which has cost $11B$, and the rest $3$ units to the paths through $G_i$. In each $G_i$, $1$ unit  will pass through $a_i-q_i$,  $1$ unit  will pass through $g_i-b_i$, $1/2$ units will pass through $a_i-c_{1,i}-b_i$, and $1/2$ unit will pass through $a_i-c_{2,i}-b_i$. This result to $a_i$ and $b_i$ costing $\alpha_i$, to $c_{1,i}$ and $c_{2,i}$ costing $2\alpha_i$, and to $q_i$ and $g_i$ costing $4\alpha_i$, as needed. 

We verify that it is indeed an optimal flow using technique similar to the one used in Theorem \ref{thm:hardmctp}. Specifically, under the optimal flow and with respect to the new cost functions, $\ell_e(x)+x\ell_e'(x)$: (i) $h$ will cost $11B$, (ii) for each $i$, $a_i$ and $b_i$ will cost $\frac{3}{2}\alpha_i$, $c_{1,i}$ and $c_{2,i}$ will cost $\frac{5}{2}\alpha_i$, and  $q_i$ and $g_i$ will cost $4\alpha_i$, which makes each path in $G_i$ costing $\frac{11}{2}\alpha_i$, and (iii) the cost of any path through the $G_i$'s has cost $\sum_{i=1}^n \frac{11}{2}\alpha_i =11B  $.\qed
\end{proof}

\section{Algorithm for MINTB on Series-Parallel Graphs} \label{sec:algo}
In this section we propose an exact algorithm for  $\mctp$  in series-parallel graphs. {We do so by reducing it to a solution of an equivalent problem defined below.}
 
Consider an instance $\mathcal{G}=\{G(V,E),(\ell_e)_{e\in E},r\}$ of  $\mctp$,  where $G(V,E)$ is a $\sepr$ graph with terminals $s$ and $t$. Since the flow we want to induce is fixed, i.e. the optimal flow $o$, by abusing notation,  let  \emph{length} $\ell_e$ denote $\ell_e(o_e)$, for each $e\in E$, and \emph{used edge-set}, $E_u=\{e\in E:o_e>0\}$, denote the set of used edges under $o$. 
For  $\G$, we define the corresponding  \emph{l-instance} (length-instance) to be $\mathcal{S}(\G)=\{G(V,E),\{l_e\}_{e\in E}, E_u\}$.  We may write simply $\mathcal{S}$, if $\G$ is clear from the context. 
By the  definition below  and the equilibrium definition, Lemma \ref{lemm:induceequiv}  easily follows.



\begin{definition}
{Given an $l$-instance $\mathcal{S}=\{G(V,E),\{l_e\}_{e\in E}, E_u\}$,  
inducing} a length $L$ in $G$ is defined as the process of finding  $\ell'_e \geq \ell_e$, for all $e\in E$, such that when replacing $\ell_e$ with $\ell_e'$: (i) all used $s-t$ paths have length $L$ and (ii) all unused $s-t$ paths have length greater or equal to $L$, where a path is used when all of its edges are used, i.e. they belong to $E_u$. 
\label{def:induce}
\end{definition}



\begin{lemma}
{Consider an instance $\G$ on a $\sepr$ graph $G(V,E)$ with corresponding $l$-instance $\mathcal{S}$. }  $L$ is induced in $G$ with modified lengths $\ell_e'$ if and only if  $\{\ell'_e-\ell_e\}_{e\in E}$ is an opt-inducing toll vector for $\G$.  
\label{lemm:induceequiv}
\end{lemma}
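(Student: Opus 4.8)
The statement to prove is Lemma~\ref{lemm:induceequiv}, a clean ``if and only if'' that translates between inducing a length $L$ in the $l$-instance and having an opt-inducing toll vector for $\G$. The plan is to unwind both definitions and observe they are the same condition phrased differently.

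\medskip

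\textbf{Plan.} First I would fix notation: write $\theta_e = \ell'_e - \ell_e$ for each $e\in E$; since Definition~\ref{def:induce} requires $\ell'_e \geq \ell_e$, this is exactly the condition $\theta_e \geq 0$, i.e.\ $\Theta = \{\theta_e\}_{e\in E}$ is a valid toll vector, and conversely every toll vector arises this way. Under the tolled instance $\G^\theta$ the length of edge $e$ at the optimal flow $o$ becomes $\ell_e(o_e) + \theta_e = \ell_e + \theta_e = \ell'_e$, so the cost of an $s$--$t$ path $p$ under $\G^\theta$ evaluated at $o$ equals $\sum_{e\in p}\ell'_e$. The key point, which I would state explicitly, is that in a series-parallel graph the set of $s$--$t$ paths that are \emph{used} by $o$ (all edges carrying positive flow) coincides with the paths all of whose edges lie in $E_u$ --- this is precisely the definition of ``used path'' adopted both in the Flows paragraph and in Definition~\ref{def:induce}, so no extra argument is needed; the flow decomposition of $o$ uses exactly the paths contained in $E_u$.

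\medskip

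\textbf{The two directions.} For the forward direction, suppose $L$ is induced with modified lengths $\ell'_e$. Then every used $s$--$t$ path has $\ell'$-length exactly $L$ and every unused one has $\ell'$-length $\geq L$; translating via $\ell'_e = \ell_e(o_e)+\theta_e$, this says that under $\G^\theta$ all flow-carrying paths have equal cost $L$ and all other paths have cost $\geq L$ at the flow $o$. That is exactly the Wardrop/Nash condition for $o$ in $\G^\theta$: every used path is a minimum-cost path. Hence $o$ is a Nash flow of $\G^\theta$; since $o$ is by hypothesis the optimal flow of $\G$ and the optimal flow is unchanged by adding tolls (tolls do not alter $C(f)=\sum_e f_e\ell_e(f_e)$), the optimal flow of $\G$ equals the Nash flow of $\G^\theta$, so $\Theta$ is opt-inducing. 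For the converse, if $\Theta$ is opt-inducing then the Nash flow of $\G^\theta$ is $o$; writing $L$ for the common cost of used paths under $\G^\theta$ at $o$ (well defined since all Nash flows share the players' latency), the Nash condition gives that used paths have $\ell'$-length $L$ and all paths have $\ell'$-length $\geq L$, which is exactly the statement that $L$ is induced with modified lengths $\ell'_e = \ell_e + \theta_e$. I would also note $\ell'_e \geq \ell_e$ holds since $\theta_e\geq 0$.

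\medskip

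\textbf{Main obstacle.} There is no serious obstacle; this is essentially a definition-chasing lemma and the author clearly intends it as ``easily follows.'' The only subtlety worth a sentence is the identification of ``used paths'' in the two senses (carrying positive $o$-flow vs.\ lying in $E_u$), and the fact that the optimal flow set and the value $o$ are invariant under tolling so that ``optimal flow in $\G$ = Nash flow in $\G^\theta$'' and ``$o$ is a Nash flow in $\G^\theta$'' are interchangeable statements. I would keep the write-up to a short paragraph, making the substitution $\theta_e=\ell'_e-\ell_e$ and then citing the equilibrium definition directly in both directions.
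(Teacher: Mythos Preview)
Your proposal is correct and matches the paper's approach exactly: the paper does not give a proof at all, merely stating that the lemma ``easily follows'' from Definition~\ref{def:induce} and the equilibrium definition, and your write-up is precisely the definition-chase the authors have in mind. The one subtlety you flag (the two notions of ``used path'') is handled at the right level of rigor for this paper, since the standard edge-based Nash characterization makes every path in $E_u$ a shortest path.
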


We call edges with $\ell_e'>\ell_e$ tolled edges as well. Under these characterizations, observe that finding a toll vector $\Theta$ that solves $\mctp$ for  instance $\G$ with graph $G$, is equivalent to inducing length $L$ in $G$ with minimum number of tolled edges, where $L$ is the  common equilibrium cost of the used paths in $G^\theta$. In general, this $L$ is not known in advance and it might be greater than $\ell_{max}$, i.e. the cost of the most costly used path in $G$, see e.g. fig. \ref{fig:counter}. Though, for $\sepr$ graphs we prove (Lemma \ref{lemm:SP}) a monotonicity property that ensures  that inducing length $\ell_{max}$ results in less or equal number of tolled edges than  inducing any $\ell'>\ell_{max}$. Our algorithm relies on the above equivalence and induces $\ell_{max}$ with minimum number of tolled edges. 


\begin{figure}[!ht]
\begin{minipage}{0.7\linewidth}
\input{makelistpa}
\end{minipage} 
\hfill
\begin{minipage}{0.3\linewidth}\centering
\includegraphics[width=0.6\linewidth]{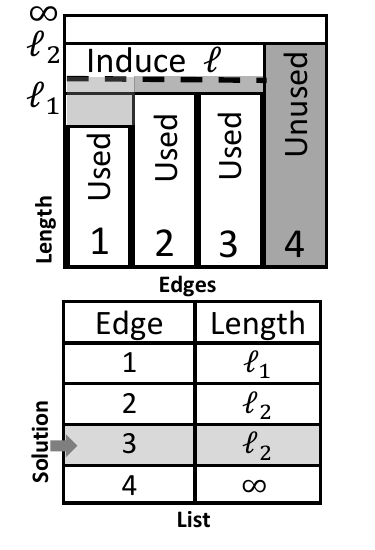}
\captionof{figure}{Example of list}
\label{fig:example}
\end{minipage}
\end{figure}


\subsubsection{Algorithm for parallel link networks:} \label{sssec:algoPA}

Before introducing the algorithm for $\mctp$  on $\sepr$ graphs, we consider the problem of inducing a length $L$ in a parallel {link} network $P$ using minimum number of edges. It is easy to see that all edges with length less than the maximum among used edges, say $\ell_{max}$, should get a toll. Similarly, to induce any length $\ell>\ell_{max}$, all edges with cost less than $\ell$ are required to be tolled. 

Define an `edge-length' pair as the pair $(\eta,\ell)$ such that by using at most $\eta$ edges a length $\ell$ can be induced in a given graph. Based on the above observations we create the `edge-length' pair list, $lst_P$, in Algorithm \ref{alg:makelistpa}.  {By r}eordering the edges in increasing length order, let edge $k$ have length $\ell_k$ for $k=1$ to $m$. Also let there be $i_0$ number of edges with length less than $\ell_{max}$. The list gets the first entry $(i_0,\ell_{max})$ and subsequently for each $i=i_0+1$ to $m$, {gets the entry} $(i,\ell_{i+1})$, where $\ell_{m+1}=\infty$. 

 To induce any length $\ell$, starting from the first `edge-length' pair in list $lst_P$ {we} linearly scan the list until for the first time we encounter the `edge-length' pair with $\eta$ edges and length strictly greater than $\ell$. Clearly $(\eta-1)$ is the minimum number of edges required to induce $\ell$ as illustrated in Figure \ref{fig:example}.

\subsubsection{Algorithm structure:}

The proposed algorithm for $\mctp$, Algorithm \ref{alg:mctp}, proceeds in a recursive manner on a given parse tree $T$ of the $\sepr$ graph $G$ of an $l$-instance $\mathcal{S}$, where we create $\mathcal{S}$ given instance $\mathcal{G}$ and optimal flow $o$. Recall that for each node $v$ of the parse tree we have an associated $\sepr$ subgraph $G_v$ with the terminals $s_v$ and $t_v$. The two children of node $v$, whenever present, represent two subgraphs of $G_v$, namely $G_1$ and $G_2$. Similar to the parallel link graph our algorithm creates an `edge-length' pair list for each node $v$. 

\vspace{-7 pt} 
\paragraph{Central idea}
Beginning with the creation of a list for each leaf node of the parse tree using Algorithm \ref{alg:makelistpa} we keep on moving up from the leaf level to the root level. At every node the list of its two children, $lst_1$ and $lst_2$, are optimally combined to get the current list $lst_v$.
For each `edge-length' pair $(\eta,\ell)$ in a current list we maintain two pointers $(p1,p2)$ to point to the two specific pairs, one each from its descendants, whose combination generates the pair $(\eta,\ell)$. Hence each element in the list of a `series' or `parallel' node $v$ is given by a tuple, $(\eta,\ell,p1,p2)$. 

The key idea in our approach is that the size of the list $lst_v$ for every node $v$, is upper bounded by the number of edges in the subgraph $G_v$. Furthermore, for each series or parallel node, we device  polynomial time algorithms, Algorithm \ref{alg:combineseries} and Algorithm \ref{alg:combineparallel} respectively, which carry out the above combinations optimally.

\vspace{-7 pt}
\paragraph{Optimal list creation}
Specifically, we first compute the number of edges necessary to induce the length of maximum used path between $s_v$ and $t_v$, which corresponds to the first `edge-length' pair in $lst_v$. Moreover, the size of the list is limited by the number of edges necessary for inducing the length 
$\infty$, as computed next. Denoting the first value by $s$ and the latter by $f$, for any `edge-length' pair $(\eta,\ell)$ in $lst_v$, $\eta\in \{s,s+1,\dots,f\}$. 

Considering an $\eta$ in that range we may use $\eta'$ edges in subgraph $G_1$ and $\eta-\eta'$ edges in subgraph $G_2$ to induce some length, which gives a feasible division of $\eta$. Let $\eta'$ induce $\ell_1$ in $G_1$ and $\eta-\eta'$ induce $\ell_2$ in $G_2$. In `series' node the partition induces $\ell=\ell_1+\ell_2$ whereas in `parallel' node it induces $\ell=\min\{\ell_1,\ell_2\}$.

Next we fix the number of edges to be $\eta$ and find the feasible division that maximizes the induced length in $G$ and subsequently a new `edge-length' pair is inserted in $lst_v$. We repeat for all $\eta$, starting from $s$ and ending at $f$. This gives a common outline for both Algorithm \ref{alg:combineseries} and Algorithm \ref{alg:combineparallel}. A detailed description is provided in Theorem \ref{thm:optmakelist}.

\vspace{-7 pt}
\paragraph{Placing tolls on the network}
Once all the lists have been created, Algorithm \ref{alg:placetoll} traverses the parse tree starting from its root node and optimally induces the necessary lengths at every node. At the root node the length of maximum used path in $G$ is induced. 
At any stage, due to the optimality of the current list, given a length $\ell$ that can be induced there exists a unique `edge-length' pair that gives the optimal solution. In the recursive routine after finding this specific pair, we forward the length required to be induced on its two children. For a `parallel' node  the length $\ell$ is forwarded to both of its children, whereas in a `series' node the length is appropriately split between the two. Following the tree traversal the algorithm eventually reaches the leaf nodes, i.e. the parallel link graphs,  where given a length $\ell$ the optimal solution is to make each edge $e$ with length $\ell_e < \ell$ equal to length $\ell$ by placing toll $\ell-\ell_e$. A comprehensive explanation is presented under Lemma \ref{lemm:optplacetoll}.     

\begin{algorithm}[H]
\Indm
\KwIn{$\mctp$ instance $\mathcal{G}=\{G(V,E),(\ell_e)_{e\in E},r\}$ for a $\sepr$ graph $G(V,E)$, An optimal flow: $o$ }
\KwOut{Minimal cardinality tolls $\Theta$}
\Indp
 Create $l$-instance $\mathcal{S}=\{G(V,E),\{l_e\}_{e\in E}, E_u\}$\; 
 Create a Parse tree $T$ for $G(V,E)$\;
 $\ell_{max} \gets $ cost of max $(s,t)$ used path in $G$\;
 Create empty collection of lists $\mathcal{L}$ and update,  $\mathcal{L} \gets \mt{MAKELIST}\left(T\right)$\; \label{alg:mctp:makelist}
 Compute the optimal tolls, $\Theta \gets \mt{PLACETOLL}\left(T, \ell_{max}\right)$\; \label{alg:mctp:placetoll}
 \caption{SolMINTB}
 \label{alg:mctp}
\end{algorithm}

\begin{algorithm}[H]
\Indm
\KwIn{Parse Tree: $T$ with root $r$,  Collection of lists:  $\mathcal{L}$ (Global) }
\KwOut{Processed collection of lists: $\mathcal{L}_{T}=\{lst_v : v \in V \cap T \}$}
\Indp 
 \If{$T$ is a leaf node}{
 $\mt{MAKELISTPL}(T)$\; 
 \Return
 }
 Recur on the children, $\mt{MAKELIST}\left(T{\cdot}p\right)$, $p \in \left\{ 1, 2\right\}$\;
 \eIf{Root node of $T$ is Series}{
 $\mt{COMBINESERIES}\left(lst_{1}, lst_{2}, lst_{r}\right)$\;
 }{
 $\mt{COMBINEPARALLEL}\left(lst_{1}, lst_{2}, lst_{r}\right)$\;
 }
 \caption{MAKELIST}
 \label{alg:makelist}
\end{algorithm}

\begin{algorithm}[H]
\Indm
 \KwIn{Parse Tree: $T$ with root $r$, Length: $\ell_{in}$, Collection of lists: $\mathcal{L}$ (Global) }
 \KwResult{Toll on each edge $\Theta_{T}=\{\theta_e \in \mathbb{R}_{+}: e \in E \cap T \}$}
\Indp 
 \If{$T$ is a leaf node}{
 \For{ each edge $e$ with $\ell_e < \ell_{in}$ in the parallel link }{
  Set toll $\theta_e \gets \ell_{in}-\ell_e$\; \label{alg:placetoll:set_toll}
  }
  \Return
 }
 Select element in list $lst_r$ to induce $\ell_{in}$,
 $opt \gets \argmin_{j} \{lst_{(r,j)}{\cdot}\eta : lst_{(r,j)}{\cdot}\ell \geq \ell_{in}\}$\;
 \eIf{Root node of $T$ is Series}{\label{alg:placetoll:divlen}
 Fix cost on the right sub-tree, $\ell_{2} \gets lst_{(r, opt)}{\cdot}p2{\cdot}\ell$\;
 Fix cost on the left sub-tree, $\ell_{1} \gets \ell_{in}-\ell_{2}$\; 
 }{
 Fix cost on the sub-trees,  $\ell_{1} \gets \ell_{in}$, $\ell_{2} \gets \ell_{in}$\;
 \label{alg:placetoll:divlen2}  
 }
 Recur on root $r_1$ of first children $T{\cdot}1$, $\mt{PLACETOLL}\left(T{\cdot}1, \ell_1\right)$  \;
 Recur on root $r_2$ of second children $T{\cdot}2$, $\mt{PLACETOLL}\left(T{\cdot}2, \ell_2\right)$  \;
 
 \caption{PLACETOLL}
 \label{alg:placetoll}
\end{algorithm}

\newpage

\begin{algorithm}[H]
\Indm
\KwIn{Lists: $lst_{1}$, $lst_{2}$, $lst$.}
\KwOut{Processed List: $lst$.}
\Indp 
  Let $l$ and $r$ be the size of $lst_{1}$ and $lst_{2}$ respectively\; 
  Min number of edges in $lst$, $s\gets lst_{(1,0)}{\cdot}\eta + lst_{(2,0)}{\cdot}\eta$\;  \label{alg:combineseries:s}
  Max number of edges in $lst$,
  $f\gets \min\left\{lst_{(1,l)}{\cdot}\eta + lst_{(2,0)}{\cdot}\eta, 
 lst_{(1,0)}{\cdot}\eta + lst_{(2,r)}{\cdot}\eta\right\}$\; \label{alg:combineseries:f}
 
 \For{$i \gets s$ \textbf{to} $f$}{\label{alg:combineseries:for}
 Find the possible edge divisions for $i$, \label{alg:combineseries:div}
 $\mathcal{I}=\left\{\left(j_{1},j_{2}\right): lst_{(1,j_1)}{\cdot}\eta+lst_{(2,j_2)}{\cdot}\eta=i \right\}$\;
 Find the cost maximizing division\;
    $\left(opt_1,opt_2\right)=\argmax_{\left(j_{1},j_{2}\right) \in \mathcal{I}} 
   \left( lst_{(1,j_1)}{\cdot}\ell+lst_{(2,j_2)}{\cdot}\ell\right)$\; \label{alg:combineseries:maxdiv}
   $ \ell_{i} \gets lst_{(1,opt_1)}{\cdot}\ell+lst_{(2,opt_2)}{\cdot}\ell$\;
 Create the new element $\alpha$,
   	$\left(\alpha{\cdot}\eta, \alpha{\cdot}\ell, \alpha{\cdot}p1, \alpha{\cdot}p2\right) \gets $   
   	$ \left( i, \ell_{i}, lst_{(1,opt_1)}, lst_{(2,opt_2)}\right)$\;
   	Insert $\alpha$ in list $lst$\;
 }
 \caption{COMBINESERIES}
 \label{alg:combineseries}
\end{algorithm}

\begin{algorithm}[H]
\Indm
\KwIn{Lists: $lst_{1}$, $lst_{2}$, $lst$.}
\KwOut{Processed List: $lst$.}
\Indp 
  Let $l$ and $r$ be the size of $lst_{1}$ and $lst_{2}$ respectively\;
  Length of maximum used path in the combined graph \\
  $\ell_{max}\gets \max\{lst_{(1,0)}{\cdot}\ell, lst_{(2,0)}{\cdot}\ell\}$\;
  Min number of edges to induce $\ell_{max}$ in $G_p$ for $p\in\{1,2\}$, \\
  $s_p \gets \min\{lst_{(p,j)}{\cdot}\eta:lst_{(p,j)}{\cdot}\ell\geq \ell_{max}\}$ \; \label{alg:combineparallel:s:sub}  
  Min number of edges in $lst$,
  $s\gets s_{1}+s_{2}$\;\label{alg:combineparallel:s}
  Max number of edges in $lst$,
 $f\gets lst_{(1,l)}{\cdot}\eta+lst_{(2,r)}{\cdot}\eta$\;\label{alg:combineparallel:f}
 
 \For{$i \gets s$ \textbf{to} $f$}{\label{alg:combineparallel:for}
 Find the possible edge divisions for $i$, \label{alg:combineparallel:div}
 $\mathcal{I}=\left\{\left(j_{1},j_{2}\right): lst_{(1,j_1)}{\cdot}\eta+lst_{(2,j_2)}{\cdot}\eta=i \right\}$\;
 Find the cost maximizing division\;
    $\left(opt_1,opt_2\right)=\argmax_{\left(j_{1},j_{2}\right) \in \mathcal{I}} 
   \min\left\{lst_{(1,j_1)}{\cdot}\ell,lst_{(2,j_2)}{\cdot}\ell\right\}$\; \label{alg:combineparallel:maxdiv}
 $ \ell_{i} \gets \min\{lst_{(1,opt_1)}{\cdot}\ell,lst_{(2,opt_2)}{\cdot}\ell\}$  \;
 Create the new element $\alpha$, 
   	$\left(\alpha{\cdot}\eta, \alpha{\cdot}\ell, \alpha{\cdot}p1, \alpha{\cdot}p2\right) \gets $   
   	$ \left( i, \ell_{i}, lst_{(1,opt_1)}, lst_{(2,opt_2)}\right)$\;
   	Insert $\alpha$ in list $lst$\;
 }
 \caption{COMBINEPARALLEL}
 \label{alg:combineparallel}
\end{algorithm}

\subsection{Optimality and time complexity of Algorithm SolMINTB}

\subsubsection{Proof outline:}
The proof of Theorem \ref{thm:correct} which states that the proposed algorithm solves the $\mctp$ problem in $\sepr$ graphs in polynomial time, is broken down in Lemma \ref{lemm:SP}, Lemma \ref{lemm:optplacetoll} and Theorem \ref{thm:optmakelist}. The common theme in the proofs is the use of an inductive reasoning starting from the base case of parallel link networks, which seems natural given the parse tree decomposition. 
Lemma \ref{lemm:SP} gives a monotonicity property of the number of edges required to induce length $\ell$ in a $\sepr$ graph guiding us to induce the length of maximum used path to obtain an optimal solution. 

The key Theorem \ref{thm:optmakelist} is essentially the generalization of the ideas used in the parallel link network to $\sepr$ graphs. It proves that the lists created by Algorithm \ref{alg:mctp} follow three desired properties. 1) The maximality of the `edge-length' pairs in a list, i.e. for any `edge-length' pair $(\eta,\ell)$ in $lst_v$ it is not possible to induce a length greater than $\ell$ in $G_v$ using at most $\eta$ edges. 2) The `edge-length' pairs in a list follows an increasing length order which makes it possible to locate the optimal solution efficiently. 3) Finally the local optimality of a list at any level of the parse tree ensures that the `series' or `parallel' combination preserves the same property in the new list. 

In Lemma \ref{lemm:optplacetoll} it is proved that the appropriate tolls on the edges can be placed provided the correctness of Theorem \ref{thm:optmakelist}. The basic idea is while traversing down the parse tree at each node we induce the required length in a locally optimal manner. Finally, in the leaf nodes the tolls are placed on the edges and the process inducing a given length is complete. Exploiting the linkage between the list in a specific node and the lists in its children we can argue that this local optimal solutions lead to a global optimal solution.    

Finally, in our main theorem, Theorem \ref{thm:correct}, combining all the elements we prove that the proposed algorithm solves $\mctp$ optimally. In the second part of the proof of Theorem \ref{thm:correct}, the analysis of running time of the algorithm is carried out. The creation of the list in each node of the parse tree takes $\mathcal{O}(m^2)$ time, whereas the number of nodes is bounded by $\mathcal{O}(m)$, implying that Algorithm \ref{alg:mctp} terminates in $\mathcal{O}(m^3)$ time. Here $m$ is the number of edges in the $\sepr$ graph $G$. 

\subsubsection{Proof of correctness:}
In what follows we prove the results following the described outline.
\begin{lemma}
In an $l$-instance $\mathcal{S}$, with $\sepr$ graph $G$ and maximum used  $(s,t)$ path length $\ell_{max}$, any length $L$ can be induced in $G$ if and only if $L\geq \ell_{max}$. Moreover if length $L$ is induced optimally with $T$ edges then length $\ell_{max}\leq \ell \leq L$ can be induced optimally \emph{with $t \leq T$ edges}. 
\label{lemm:SP}
\end{lemma}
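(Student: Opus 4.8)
The plan is to prove both claims by structural induction on the parse tree of the $\sepr$ graph $G$, following the recursive decomposition into series and parallel compositions, with the parallel-link network as the base case. For the base case, if $G$ is a parallel-link network with edges of lengths $\ell_1 \le \cdots \le \ell_m$ and maximum used length $\ell_{max}$, then inducing a length $L$ simply requires raising every edge with $\ell_e < L$ up to $L$; this is feasible for any $L \ge \ell_{max}$ (and infeasible for $L < \ell_{max}$, since a used edge cannot have its length lowered), and the number of tolled edges is exactly $|\{e : \ell_e < L\}|$, which is clearly nondecreasing in $L$. That settles both statements for leaves and in particular shows that inducing $\ell_{max}$ uses no more edges than inducing any larger $L$.

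For the inductive step I would split into the two composition cases. In a \emph{series} composition $G = G_1 \cdot G_2$, every $s$-$t$ path is the concatenation of an $s_1$-$t_1$ path in $G_1$ with an $s_2$-$t_2$ path in $G_2$, so a length $L$ is induced in $G$ iff there are lengths $L_1, L_2$ with $L_1 + L_2 = L$ such that $L_i$ is induced in $G_i$; used paths concatenate to used paths, and the maximum used path length satisfies $\ell_{max}^{G} = \ell_{max}^{G_1} + \ell_{max}^{G_2}$. Feasibility for all $L \ge \ell_{max}^G$ then follows from the induction hypothesis applied to each $G_i$ with the split $L_i = \ell_{max}^{G_i} + (L - \ell_{max}^G)\cdot[\,i=1\,]$ (put all the slack on one side). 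For monotonicity, if $L$ is induced optimally in $G$ with $T$ edges via an optimal split $(L_1, L_2)$, and $\ell_{max}^G \le \ell \le L$, write $\ell = \ell_1 + \ell_2$ with $\ell_{max}^{G_i} \le \ell_i \le L_i$ (feasible since $\ell - \ell_{max}^G \le L - \ell_{max}^G = (L_1 - \ell_{max}^{G_1}) + (L_2 - \ell_{max}^{G_2})$, so the deficit can be distributed); by the induction hypothesis each $\ell_i$ is induced in $G_i$ with at most as many edges as $L_i$, and summing gives $t \le T$.

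In a \emph{parallel} composition $G = G_1 \| G_2$, the $s$-$t$ paths are exactly the $s_i$-$t_i$ paths of the two sides, so inducing length $L$ in $G$ means inducing in each $G_i$ a common "floor" $L$ on all used paths with all unused paths $\ge L$ — i.e. inducing $L$ in each $G_i$ — and $\ell_{max}^G = \max\{\ell_{max}^{G_1}, \ell_{max}^{G_2}\}$; note that for $\ell_{max}^{G_i} \le L$ this is handled by the induction hypothesis on $G_i$, and if some $\ell_{max}^{G_i} > \ell_{max}^G$ is impossible by definition of the max, but one still must induce $L \ge \ell_{max}^G \ge \ell_{max}^{G_i}$ in each side, which is feasible. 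The cost is the sum of the two side-costs, and monotonicity follows by applying the induction hypothesis on each side with the \emph{same} target $\ell$ on both. The main obstacle I anticipate is the parallel case bookkeeping: one must be careful that "inducing $L$ in $G$" genuinely decomposes as "induce $L$ in $G_1$ and induce $L$ in $G_2$" even when one side has a small $\ell_{max}^{G_i}$ — the subtlety is that a side with all paths already cheap still needs every used edge on a short path handled, and that the definition of \emph{induce} (Definition \ref{def:induce}) forces the common value on used paths to be exactly $L$, not just $\ge \ell_{max}^{G_i}$; getting this equivalence stated cleanly, and confirming it is consistent with the recursive list-combination in Algorithm \ref{alg:combineparallel} (which takes $\min$ of the two induced lengths), is where the care is needed.
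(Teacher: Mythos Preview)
Your proposal is correct and follows essentially the same structural-induction argument as the paper: base case on parallel links, then separate series and parallel inductive steps using that $\ell_{max}^G = \ell_{max}^{G_1}+\ell_{max}^{G_2}$ in series and $\ell_{max}^G = \max\{\ell_{max}^{G_1},\ell_{max}^{G_2}\}$ in parallel. The only cosmetic difference is in the series monotonicity step: the paper wraps the same construction in a contradiction (assuming $T<t$ and then exhibiting a split for $\ell$ using $\le T$ edges), whereas you argue directly by choosing any $(\ell_1,\ell_2)$ with $\ell_{max}^{G_i}\le \ell_i\le L_i$ and summing the inductive bounds---your formulation is slightly cleaner, and your worry about the parallel bookkeeping is unnecessary since the decomposition ``induce $L$ in $G_1$ and in $G_2$'' is exact.
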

\begin{proof}

We prove the lemma using induction on the height of a parse tree decomposition of $G$. 

\vspace{-7 pt}
\paragraph{Base Case} The base case is $G$ with height $1$ which is equivalent to a parallel link network. 
Let the $m$ edges of $G$ are in ascending order according to their lengths, i.e. $\ell_1\leq \ell_2 \leq \dots \leq \ell_m$ with the used edges listed as $\{e_1,e_2,\dots, e_u\}$,  $\ell_{u+1}>\ell_u$. From the definition of inducing $L$ we know it is impossible to induce $L<\ell_u$ whereas any length $L \geq \ell_u$ can be induced optimally with the edges  in
$S= \left\{j \in \{1,\dots, m\}: L > \ell_j \right\}$.  Let $S_L$ and $S_{\ell}$ be the set of edges required to induce length $L$ and $\ell$, respectively. Our base case holds from the observation that if $\ell \leq L$ then $S_{\ell}\subseteq S_L$ and consequently $|S_{\ell}|=t\leq T=|S_L|$. 

\vspace{-7 pt}
\paragraph{Induction step} The induction hypothesis is the stated lemma holds for all $\sepr$ graphs with parse tree decomposition of height less or equal to $k$. Consider $G$ to be a $\sepr$ graph with parse tree decomposition of height $(k+1)$. The root node of the parse tree has two children, namely $G_1$ and $G_2$, with height less or equal to $k$. $G_1$ and $G_2$ can be combined either in  parallel or in series. Let $\ell_{max}$, $\ell_{max,1}$ and $\ell_{max,2}$ be the lengths of maximum used $(s,t)$  paths in $G$, $G_1$ and $G_2$, respectively. 

\vspace{-7 pt}
\paragraph{Parallel Combination} We induce $L$ between the common start node $s$ and end node $t$ of the parallel subgraphs, $G_1$ and $G_2$, so as to induce length $L$ in $G$. Clearly, from induction hypothesis it is possible to induce any length $L \geq max\{\ell_{max,1}, \ell_{max,2}\}=\ell_{max}$ whereas we can not induce length less than $\ell_{max}$.
Further, let $L$ and $\ell$ be optimally induced in $G_i$ with $T_i$ and $t_i$ edges respectively, for $i\in \{1,2\}$. Hence we require $T=T_1+T_2$ edges to optimally induce $L$ and $t=t_1+t_2$ edges to optimally induce $\ell$. According to induction hypothesis $\ell\leq L$ implies $T_i \geq t_i$ for both $i=1$ and $2$, which proves that $t\leq T$. 

\vspace{-7 pt}   
\paragraph{Series Combination}
In order to induce length $L$ in $G$ we split $L$ into $L_1$ \& $L_2$ and induce them in $G_1$ and $G_2$. Lengths $L_i$ can be induced in $G_i$ if and only if $L_i\geq \ell_{max,i}$ for $i\in\{1,2\}$. Clearly any length $L\geq \ell_{max,1}+\ell_{max,2}=\ell_{max}$ can be induced in $G$, $\ell_{max}$ being the minimum possible. 
Further, let $L\geq \ell$ with $L$ induced optimally with $T$ edges in $G$, and $\ell$ with $t$ edges. The optimal assignment for length $L$ is, $T_1$ edges used to induce $L_1$ on $G_1$ and $T_2$ edges to induce $L_2$ on $G_2$ with $T=T_1+T_2$ and $L=L_1+L_2$. Supposing $T<t$, i.e. we induce $L$ with less number of edges, we will reach a contradiction. In the above case consider the following assignment to induce $\ell$ in $G$: We use $t_1'$ edges to induce $min\{L_1,\ell-\ell_{max,2}\}$ on $G_1$ optimally, whereas using $t_2'$ edges we optimally induce $max\{\ell-L_1,\ell_{max,2}\}$ on $G_2$. If $L_1\leq \ell-\ell_{max,2}$ then $t_1'= T_1$ and due to the induction hypothesis we have $T_2 \geq t_2'$ as $L_2 \geq \ell-L_1$. Therefore, in the new assignment we require $t'=t_2'+T_1$ edges with $t'\leq T < t$. This leads to a contradiction to $t$ being the number of edges required in an optimal assignment. On the contrary when $L_1 > \ell-\ell_{max,2}$ we have $t_1' \leq T_1$ in $G_1$. Also we have $t_2' \leq T_2 $ in $G_2$ as the induced length $\ell_{max,2}$ is the minimum possible length. This leads to the same contradiction as before, 
making the claim in the lemma valid for the level $(k+1)$ graph $G$. By the principle of induction the lemma holds for any $\sepr$ graph $G$. \qed
 
\end{proof}

\begin{minipage}{0.7\linewidth}
Note: The above lemma breaks in general graphs. As an example, in the graph in Figure \ref{fig:counter} to induce a length of $3$ we require $3$ edges, whereas to induce a length of $4$ only $2$ edges are sufficient.
\end{minipage} 
\hfill
\begin{minipage}{0.3\linewidth}\centering
\includegraphics[width=0.45\linewidth]{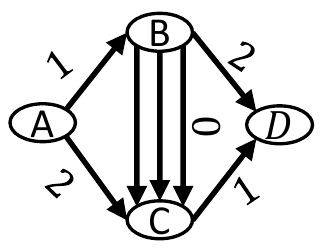}
\captionof{figure}{Counter example}
\label{fig:counter}
\end{minipage}

\begin{theorem}
Let $\mathcal{S}$ be an $l$-instance and $G$ be the associated $\sepr$ graph with parse tree representation $T$. For every node $v$ in $T$, let the corresponding $\sepr$ network be $G_v$ and $\ell_{max,v}$ be the length of the maximum used path from $s_v$ to $t_v$. Algorithm \ref{alg:makelist} creates the list, $lst_v$, with the following properties.
\begin{enumerate}
\item For each `edge-length' pair $(\eta_i,\ell_i)$, $i=1$ to $m_v$, in $lst_v$, $\ell_i$ is the maximum length that can be induced in the network $G_v$ using at most $\eta_i$ edges.
\item For each `edge-length' pair $(\eta_i,\ell_i)$ in the list $lst_v$, we have the total ordering, i.e. $\eta_{i+1}=\eta_{i}+1$ for all $i=1$ to $m_v-1$, and $\ell_{max,v}=\ell_1 \leq \ell_2 \leq \dots \leq \ell_{m_v}=\infty$. 
\item In $G_v$ , length $\ell$ is induced by minimum $\eta_{\hat{i}}$ edges if and only if $\ell\geq \ell_1$ and $\hat{i}= \argmin\{\eta_j:(\eta_j,\ell_j)\in lst_v \wedge \ell_j\geq \ell\}$.
\end{enumerate}
\label{thm:optmakelist}
\end{theorem}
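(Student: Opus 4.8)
The plan is to prove all three properties simultaneously by structural induction on the parse tree $T$, mirroring the inductive structure already used in Lemma \ref{lemm:SP}. The base case is a leaf node: here $lst_v$ is produced by Algorithm \ref{alg:makelistpa} (MAKELISTPL) on a parallel link network, and I would verify the three properties directly from the construction. Property~2 is immediate: the loop in MAKELISTPL runs $i$ from $i_0$ to $m$ producing successive entries $(i_0,\ell_{max}),(i_0+1,\ell_{i_0+2}),\dots,(m,\infty)$, so the $\eta$-values increase by exactly one and the $\ell$-values are non-decreasing because the edges were sorted in line~\ref{alg:makelistpa:order}. Property~1 follows from the observation already stated in Section~\ref{sssec:algoPA}: to induce any length $\ell>\ell_{max}$ in a parallel link network one must toll exactly the edges with length $<\ell$, and tolling the $i$ cheapest edges lets one induce any length up to (but not exceeding) $\ell_{i+1}$; hence with at most $i$ edges the maximum inducible length is precisely $\ell_{i+1}$, which is the recorded value. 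Property~3 is then a direct corollary of Properties~1 and~2 together with Lemma \ref{lemm:SP}'s monotonicity (already available), since the list is sorted and gap-free in $\eta$.

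For the induction step, I would assume the three properties hold for the lists $lst_1$ and $lst_2$ of the two children $G_1,G_2$ of a series or parallel node $v$, and analyze COMBINESERIES and COMBINEPARALLEL. The core structural fact is that any way of inducing a length in $G_v$ using $\eta$ edges decomposes (by definition of series/parallel composition and Definition \ref{def:induce}) into inducing lengths $\ell_1$ in $G_1$ with some $\eta'$ edges and $\ell_2$ in $G_2$ with $\eta-\eta'$ edges, where the induced length in $G_v$ is $\ell_1+\ell_2$ (series) or $\min\{\ell_1,\ell_2\}$ (parallel). By the induction hypothesis (Property~1 applied to the children), for a fixed split $(\eta',\eta-\eta')$ the best achievable $(\ell_1,\ell_2)$ are exactly the recorded list entries $lst_{(1,\eta')}{\cdot}\ell$ and $lst_{(2,\eta-\eta')}{\cdot}\ell$; and since in both compositions the induced length $\ell_1+\ell_2$ or $\min\{\ell_1,\ell_2\}$ is monotone non-decreasing in each of $\ell_1,\ell_2$, taking the componentwise-optimal child entries also maximizes the combined length over that split. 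The algorithms then take a max over all feasible splits $\mathcal I$ for each $\eta=i$, which is exactly the maximum inducible length with $i$ edges — this gives Property~1 for $lst_v$. For Property~2 I would check the bounds computed in lines \ref{alg:combineseries:s}--\ref{alg:combineseries:f} (resp. \ref{alg:combineparallel:s:sub}--\ref{alg:combineparallel:f}): the starting value $s$ is the minimum number of edges to induce $\ell_{max,v}$ (using, for the parallel case, the children's $s_p$ values, which are well-defined by Property~3 of the children and by Lemma \ref{lemm:SP}), the final value $f$ corresponds to inducing $\infty$, and the loop increments $i$ by one, so the $\eta$-column is gap-free; monotonicity of the $\ell$-column follows because inducing with more edges can only weakly increase the maximum inducible length (again Lemma \ref{lemm:SP}, or directly from Property~1 just established). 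Property~3 for $lst_v$ then follows from Properties~1 and~2 exactly as in the base case.

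The step I expect to be the main obstacle is the careful justification, in the series case, that the componentwise-optimal choice of child entries is globally optimal for a fixed total edge budget $\eta$ — i.e. that there is no clever "unbalanced" division where using a suboptimal (shorter) $\ell_1$ for the given $\eta'$ somehow enables a better total. This is where one genuinely needs the induction hypothesis in the strong form of Property~1 (the child entry is the \emph{maximum} inducible length for that budget, not just \emph{a} inducible length), combined with the fact that $\ell\mapsto\ell_1+\ell_2$ is coordinatewise monotone; the argument is that replacing any $(\ell_1,\ell_2)$ by the recorded maxima $(lst_{(1,\eta')}{\cdot}\ell, lst_{(2,\eta-\eta')}{\cdot}\ell)$ can only increase $\ell_1+\ell_2$, so the $\argmax$ over $\mathcal I$ in line \ref{alg:combineseries:maxdiv} already ranges over the best representative of each split. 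A secondary subtlety is verifying that $f$ as computed is actually large enough that every $\eta$ in $\{s,\dots,f\}$ admits a feasible split with both child indices in range (so that $\mathcal I\neq\emptyset$ throughout the loop), and that $\eta=f$ indeed yields $\ell=\infty$; this is a short monotonicity/range check using that each child list ends in an entry with $\ell=\infty$. Once these points are settled, the three properties propagate up to the root, completing the induction.
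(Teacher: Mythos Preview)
Your proposal is correct and follows essentially the same approach as the paper: structural induction on the parse tree, with the base case handled by the parallel-link analysis and the inductive step arguing that the optimal split over $\mathcal{I}$ is achieved by taking the children's recorded maxima (Property~1 of the children) together with coordinatewise monotonicity of $\ell_1+\ell_2$ and $\min\{\ell_1,\ell_2\}$. The only cosmetic difference is that the paper first derives Property~3 from Properties~1 and~2 once (globally) before the induction, whereas you re-derive it inside each step; the logical content is the same.
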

\begin{proof}

We first prove property $3$ given property $1$ and $2$. Suppose the minimum number of edges needed to induce length $\ell\geq \ell_1$ is $\eta_{opt}$. From Lemma \ref{lemm:SP} such an $\eta_{opt}$ exists as 
$\ell_1=\ell_{max,v}$ from property $2$. Moreover Lemma \ref{lemm:SP} says $\eta_{opt}\leq \eta_j$ for all `edge-length' pairs $(\eta_j,\ell_j)\in lst_v$ with $\ell_j\geq \ell$. Therefore, $\eta_{opt} \leq \eta_{\hat{i}}$, where $\hat{i}$ is as defined in the theorem. 
Property $1$ implies that for all `edge-length' pairs $(\eta_j,\ell_j)\in lst_v$ such that  $\ell> \ell_j$, the required number of edges $\eta_{opt}>\eta_j$. Note the list $lst_v$ has total ordering due to property $2$ and in list $lst_v$ the first element to have cost greater or equal to $\ell$ is $\hat{i}$. Therefore, the minimum number of edges required to induce $\ell$ is $\eta_{opt}= \eta_{\hat{i}}$. We prove property $1$ and $2$ using induction on the levels of the parse tree of $G$.


\vspace{- 7 pt}
\paragraph{Base Case} Consider any leaf node $v$ in $T$, node $v$ corresponds to a parallel link subgraph of $G$, $G_v$. The base case easily follows from the discussion regarding the algorithm for parallel link networks.

\vspace{- 7 pt}
\paragraph{Induction step} Let the properties $1$, $2$ and consequently property $3$ hold for every node upto level $k$ from the leaf. Consider the graph $G_v$, the graph corresponding to a level $(k+1)$ node $v$. The subgraphs $G_1$ and $G_2$, both at level at most $k$, have the lists $L_1$ and $L_2$, respectively. To induce length $L$ in $G_v$ we need to induce $L$ on both $G_1$ and $G_2$, if $G_v$ is the parallel combination of the two subgraphs. However, in the series  case we induce $\ell'$ on $G_1$ and $L-\ell'$ on $G_2$.

We proceed by constructing the feasible range $\{s,\dots,f\}$ of the list $lst_v$ such that for any `edge-length' pair $(\eta,\ell)$, $s\leq \eta\leq f$. 
The series and the parallel combinations manifest different ranges.
In lines \ref{alg:combineseries:s}, \ref{alg:combineseries:f} in Algorithm \ref{alg:combineseries} we calculate the feasible range for series combination. The starting `edge-length' pair will induce $\ell_{max,1}$ in $G_1$ and $\ell_{max,2}$ in $G_2$. Whereas, to get the last pair in $lst_v$ we induce $\infty$ in one of the subgraph and the minimum length in the other. The combination that uses minimum number of edges of the two possiblities, is used.   
For the parallel combination in line \ref{alg:combineparallel:s} of Algorithm \ref{alg:combineparallel} we calculate the minimum number of edges in the list $lst_v$. From Lemma \ref{lemm:SP} we cannot induce any length less than $\ell_{max,v}=\max\{\ell_{max,1},\ell_{max,2}\}$. Also due to property $3$ of $G_1$ and $G_2$,  $s_1$ and $s_2$, as given in line \ref{alg:combineparallel:s:sub}, are the minimum number of edges required to induce $\ell_{max,v}$ in $G_1$ and $G_2$, respectively. To induce the maximum length of $\infty$ in $G$ we require the maximum number of edges in both the lists as in line \ref{alg:combineparallel:f}. 
This proves we have $\ell_1=\ell_{max,v}$ and $\ell_{m_v}=\infty$.

\vspace{- 7 pt}   
\paragraph{Property $1$} Let $(\eta,\ell)$ be an `edge-length' pair in $lst_v$ and consider any feasible division, $\eta'$ in $G_1$ and $\eta-\eta'$ in $G_2$. A division is feasible if $s_1\leq \eta'\leq f_1 $ and $s_2\leq \eta- \eta'\leq f_2$. For all such divisions we have entries $(\eta',\ell_1)$ in $lst_1$ and $(\eta-\eta',\ell_2)$ in $lst_2$ due to property $2$ of $G_1$ and $G_2$. For this particular division the maximum length that can be induced in $G_1$ is $\ell_1$ and in $G_2$ is $\ell_2$. This holds from the property $1$ of both $lst_1$ and $lst_2$.  

Next consider the series and parallel cases separately. The maximum length that can be induced is $\min\{\ell_1,\ell_2\}$ in the parallel combination as it is required to induce length $\ell$ in both the subgraphs. Whereas, in the series combination the maximum length induced is $(\ell_1+\ell_2)$ for the given division. Finally, in line \ref{alg:combineparallel:maxdiv} of Algorithm \ref{alg:combineparallel} and  line \ref{alg:combineseries:maxdiv} of Algorithm \ref{alg:combineseries}, we find the length maximizing division among all feasible divisions. 
Consequently $\ell$ is the maximum length induced by at most $\eta$ edges. Property $1$ holds.  

\vspace{- 7 pt}
\paragraph{Property $2$}  For every $\eta$ in the range of $lst_v$ the feasible set $\mathcal{I}$, as computed in line \ref{alg:combineparallel:div} in Algorithm \ref{alg:combineparallel} and line \ref{alg:combineseries:div} in Algorithm \ref{alg:combineseries}, is non-empty implying $\eta_{j+1}=\eta_{j}+1$ in $lst_v$ for all $j$. This follows as due to property $2$ of $G_1$ and $G_2$, we always have at least one feasible division for any $\eta\in\{s,\dots,f\}$.

Now consider two pairs $(\eta_i,\ell_i)$ and $(\eta_j,\ell_j)$ in $lst_v$ with $\eta_j>\eta_i$. We give a feasible division for $\eta_j$ and show that the maximum length that we can induce with this division is not less than $\ell_i$. Specifically, first using $\eta_i$ edges induce $\ell_i$ in $G_v$ while using the remaining $(\eta_j-\eta_i)$ edges arbitrarily in $G_1$ and $G_2$. Let this division induce $\ell$ in $G$. Due to the fact that list $lst_1$ and $lst_2$ has total ordering, adding edges can never decrease the maximum length induced. Therefore, we have $\ell \geq \ell_i$. Further $\ell_j$ is the maximum length that can be induced using $\eta_j$ edges for any division. Clearly, $\ell_j\geq \ell\geq \ell_i$ and property $2$ holds. 

To conclude the proof, the properties $1$ and $2$ holds for $lst_v$ and the inductive hypothesis extends to all the nodes in $T$, due to the induction principle. \qed
\end{proof}
\begin{lemma}
In an $l$-instance $\mathcal{S}$ with $\sepr$ graph $G$, suppose we are given lists $lst_{v}$, for all nodes $v$ in the parse tree $T$ of $G$, all of which satisfy properties $1$, $2$ and $3$ in Theorem \ref{thm:optmakelist}. Algorithm \ref{alg:placetoll} induces any length $\ell_{in}\geq \ell_1$ optimally in $G$, where $\ell_1$ is the length of the first `edge-length' pair in $lst_r$, $r$ being the root node of $T$. Moreover, it specifies the appropriate tolls necessary for every edge in the network.
\label{lemm:optplacetoll}
\end{lemma}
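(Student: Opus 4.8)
The plan is to prove the two assertions of the lemma — that \textsc{PlaceToll} induces $\ell_{in}$ and that it does so with the fewest possible tolled edges — simultaneously, by induction on the parse tree $T$, bottom‑up, mirroring the inductive scaffolding already used for Lemma~\ref{lemm:SP} and Theorem~\ref{thm:optmakelist}. The inductive invariant I would carry is: for every node $v$ of $T$ and every length $\ell_{in}$ with $\ell_{max,v}\le \ell_{in}$, the call $\textsc{PlaceToll}(T_v,\ell_{in})$ terminates having defined tolls $\theta_e$ (equivalently modified lengths $\ell'_e=\ell_e+\theta_e\ge\ell_e$) on a subset of $E(G_v)$ that (a) induces $\ell_{in}$ in $G_v$ in the sense of Definition~\ref{def:induce}, and (b) has cardinality $\eta_{\hat\imath}$, the optimal value given by property~3 of Theorem~\ref{thm:optmakelist} for inducing $\ell_{in}$ in $G_v$. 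Applying this at the root with $\ell_{in}=\ell_{max}\ (=\ell_1$ of $lst_r$ by Theorem~\ref{thm:optmakelist}, property~2) yields the lemma.

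\emph{Base case (leaf $v$, a parallel‑link network).} \textsc{PlaceToll} raises every edge $e$ with $\ell_e<\ell_{in}$ to length exactly $\ell_{in}$ and leaves the others. Since $\ell_{in}\ge\ell_{max,v}$, each used edge has $\ell_e\le\ell_{in}$, hence ends at exactly $\ell_{in}$, while each unused edge ends with length $\ge\ell_{in}$; as the used $(s_v,t_v)$ paths are single edges, all of them have length $\ell_{in}$ and all unused ones $\ge\ell_{in}$, so $\ell_{in}$ is induced. The number of tolled edges is $|\{e:\ell_e<\ell_{in}\}|$, which is exactly the count returned by the scan in property~3 and is minimal by the parallel‑link discussion. \emph{Inductive step, parallel node $v$ with children $G_1,G_2$.} Here $\ell_{max,v}=\max\{\ell_{max,1},\ell_{max,2}\}$, so $\ell_{in}\ge\ell_{max,i}$ and both recursive calls $\textsc{PlaceToll}(T_i,\ell_{in})$ are legitimate. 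Inducing $\ell_{in}$ in a parallel composition is equivalent to inducing $\ell_{in}$ in each of $G_1$ and $G_2$ (every used $(s_v,t_v)$ path lies wholly inside one $G_i$, and likewise every unused one), so by the induction hypothesis the union of the two toll sets induces $\ell_{in}$ in $G_v$; as $E(G_1)\cap E(G_2)=\emptyset$, the minimum edge count for $G_v$ is the sum of the two minima, each attained by a child call, and this sum equals $\eta_{\hat\imath}$ by optimality of $lst_v$.

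\emph{Inductive step, series node $v$ — the crux.} \textsc{PlaceToll} picks the first pair $(\eta_{opt},\ell_{opt})$ of $lst_v$ with $\ell_{opt}\ge\ell_{in}$ (so $\eta_{opt}=\eta_{\hat\imath}$ by property~3), reads from its pointers a cost‑maximizing split $\eta_{opt}=\eta_1^{*}+\eta_2^{*}$ with stored lengths $\ell_1^{*}+\ell_2^{*}=\ell_{opt}$, sets $\ell_2\gets\ell_2^{*}$ and $\ell_1\gets\ell_{in}-\ell_2$, and recurses. To close the induction I must check three things. (i) Feasibility: $\ell_2=\ell_2^{*}\ge\ell_{max,2}$ is immediate since every entry of $lst_2$ has length $\ge\ell_{max,2}$; the real point is $\ell_1=\ell_{in}-\ell_2^{*}\ge\ell_{max,1}$, which is needed for the left call to be legitimate. (ii) Correctness: with $\ell_1+\ell_2=\ell_{in}$ and the fact that a used $(s_v,t_v)$ path in a series composition is the concatenation of a used path in $G_1$ and a used path in $G_2$ (an unused one has an unused segment in some $G_i$), inducing $\ell_1$ in $G_1$ and $\ell_2$ in $G_2$ induces exactly $\ell_{in}$ in $G_v$ — routine once (i) holds. (iii) Optimality: by the induction hypothesis the child calls use the minimum edge counts for $\ell_1$ and $\ell_2$; since $\ell_1\le\ell_1^{*}$ and $\ell_2\le\ell_2^{*}$, Lemma~\ref{lemm:SP} bounds these by $\eta_1^{*}$ and $\eta_2^{*}$, so the total is $\le\eta_{opt}$, while by property~3 it is $\ge\eta_{opt}$ — hence exactly $\eta_{opt}=\eta_{\hat\imath}$.

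The main obstacle is point (i): proving that the stored cost‑maximizing split never over‑allocates length to the right child, i.e. $\ell_2^{*}\le\ell_{in}-\ell_{max,1}$ whenever $\ell_{max,v}\le\ell_{in}\le\ell_{opt}$. The lever I would use is the maximality property (Theorem~\ref{thm:optmakelist}, property~1) of the entries of $lst_v$, $lst_1$, $lst_2$ combined with the monotonicity of Lemma~\ref{lemm:SP}: shaving one edge off the heavier side of the stored $\eta_{opt}$‑split yields a feasible $(\eta_{opt}{-}1)$‑edge division whose induced length is at most $\ell_{opt-1}<\ell_{in}$, and comparing that bound against $\ell_{max,1}$ plus the right part of the division should force $\ell_2^{*}$ into the admissible range. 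I expect this to require some care about which entry precedes $(\eta_{opt},\ell_{opt})$ and about the endpoints $\eta_i^{*}=s_i$ of the feasible ranges (and, if needed, a tie‑breaking convention in \textsc{CombineSeries}, or equivalently replacing the assignment by $\ell_1\gets\min\{\ell_1^{*},\ell_{in}-\ell_{max,1}\}$, $\ell_2\gets\ell_{in}-\ell_1$); this is where the bulk of the work lies. Once (i) is in hand, parts (ii), (iii), the base case and the parallel case are immediate, and the lemma follows by induction on $T$.
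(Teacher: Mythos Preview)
Your plan is the paper's plan: both argue by induction on the parse tree, and in the series step both reduce to showing that the split $(\ell_1,\ell_2)$ produced by \textsc{PlaceToll} is ``proper'' in the sense $\ell_1+\ell_2=\ell_{in}$ and $\ell_i\le\ell_{opt_i}$, after which Lemma~\ref{lemm:SP} and property~3 on the children give optimality exactly as in your steps (ii)--(iii). The paper verifies only the upper bounds $\ell_i\le\ell_{opt_i}$ and does not check the lower bound $\ell_1\ge\ell_{max,1}$ that you isolate in point~(i); on that issue you are being more careful than the paper, not less.

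Your instinct that (i) is the crux is right, but your first line of attack --- using property~1 and Lemma~\ref{lemm:SP} to show that the stored maximising division always has $\ell_2^{*}\le\ell_{in}-\ell_{max,1}$ --- will not succeed as stated. Without a tie-breaking convention in \textsc{CombineSeries}, the argmax can route its slack to the $G_2$ side (already when two divisions both attain $\infty$, but also in finite cases), and then $\ell_1=\ell_{in}-\ell_2^{*}$ can drop below $\ell_{max,1}$. The clamping fix you mention parenthetically is what actually closes the argument: setting $\ell_2\gets\min\{\ell_{opt_2},\,\ell_{in}-\ell_{max,1}\}$ and $\ell_1\gets\ell_{in}-\ell_2$ (your version has the subscripts swapped) yields $\ell_{max,i}\le\ell_i\le\ell_{opt_i}$ for both $i$ directly from $\ell_{in}\ge\ell_{max,1}+\ell_{max,2}$ and $\ell_{in}\le\ell_{opt}=\ell_{opt_1}+\ell_{opt_2}$, after which (ii) and (iii) go through verbatim.
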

\begin{proof}
The list $lst_r$ satisfy property $3$ for $\ell_{in} \geq \ell_1$. Therefore, the minimum number of edges required to induce $\ell_{in}$ is $\eta_{opt}$ where $opt=\argmin\{\eta_j:(\eta_j,\ell_j)\in lst_r \wedge \ell_j\geq \ell_{in}\}$ and the `edge-length' pair is $(\eta_{opt},l_{opt})$. Using pointers $lst_r{\cdot}p1$ and $lst_r{\cdot}p2$, we maintain a proper linkage  to the appropriate `edge-length' pairs in the list of its children. Let the `edge-length' pairs be $(\eta_{opt_1},\ell_{opt_1})$ and $(\eta_{opt_2},\ell_{opt_2})$ in $G_1$ and $G_2$ respectively. We have $\eta_{opt}=\eta_{opt_1}+\eta_{opt_2}$ and either $l_{opt}=l_{opt_1}=l_{opt_2}$ in parallel or   $l_{opt}=l_{opt_1}+l_{opt_2}$  in series.

Algorithm \ref{alg:placetoll} induces $\ell_{in}$ optimally, provided the length $\ell_{in}$ is split in $\ell_1$ and $\ell_2$ properly. Call a split $(\ell_1,\ell_2)$ proper if it follows the following two properties 1) $\ell_{in}=\ell_1+\ell_2$ for series  and $\ell_{in}=\ell_1=\ell_2$ for parallel nodes; 2) $\ell_i \leq \ell_{opt_i} $ for $i\in\{1,2\}$. The two conditions are both necessary and together sufficient for the optimal assignment of tolls, which follow from Lemma \ref{lemm:SP} and property $3$. The necessity of the first condition follows from the monotonicity property in the current node and the second condition is necessary due to the monotonicity property in the two children. Whereas, the property $3$ of $lst_r$ implies the sufficiency when both hold. In the lines \ref{alg:placetoll:divlen} to \ref{alg:placetoll:divlen2} the division created is a proper division. In the parallel case it is true in the obvious way, whereas in the series case note that $\ell_1=\ell_{opt_1}$ and $\ell_2=\ell_{in}-\ell_{opt_1}\leq \ell_{opt}-\ell_{opt_1}=\ell_{opt_2}$, also $\ell_{in}=\ell_1+\ell_2$. So the division is indeed a proper division.     

Moreover at the leaf node level the algorithm sets the tolls in line \ref{alg:placetoll:set_toll} in the parallel link networks. From the discussion  regarding the parallel link network the optimality of the solution follows. \qed

\end{proof}

\begin{theorem}
Algorithm \ref{alg:mctp} solves the $\mctp$  problem optimally in time $\mathcal{O}\left(m^3\right)$ for the instance $\mathcal{G}=\{G(V,E),(\ell_e)_{e\in E},r\}$, where $G(V,E)$ is a $\sepr$ graph with $|V|=n$ and $|E|=m$.
\label{thm:correct}
\end{theorem}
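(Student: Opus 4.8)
The plan is to prove Theorem~\ref{thm:correct} in two parts: correctness and running time, both of which follow fairly directly by assembling the earlier results.

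\medskip
\noindent\textbf{Correctness.} By Lemma~\ref{lemm:induceequiv}, solving $\mctp$ on $\mathcal{G}$ is equivalent to inducing \emph{some} length $L$ in $G$ (with respect to the $l$-instance $\mathcal{S}$) using the minimum number of tolled edges, where $L$ is the resulting common cost of used $(s,t)$-paths. By Lemma~\ref{lemm:SP}, the number of edges needed to induce a length is monotone nondecreasing for $L \geq \ell_{max}$, and no length below $\ell_{max}$ is inducible; hence an optimal solution is obtained by inducing exactly $\ell_{max}$, the cost of the maximum used $(s,t)$-path in $G$. This is precisely what Algorithm~\ref{alg:mctp} does: it builds the $l$-instance, constructs a parse tree $T$ via Valdes et al.~\cite{valdes1979recognition}, calls \texttt{MAKELIST} to populate $\mathcal{L}$, and then calls \texttt{PLACETOLL}$(T,\ell_{max})$. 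By Theorem~\ref{thm:optmakelist}, every list $lst_v$ produced by \texttt{MAKELIST} satisfies properties 1--3 — in particular $lst_r$ for the root $r$ has first entry $\ell_1 = \ell_{max}$. Then Lemma~\ref{lemm:optplacetoll} applies with $\ell_{in} = \ell_{max} \geq \ell_1$, so \texttt{PLACETOLL} induces $\ell_{max}$ in $G$ with the minimum number of tolled edges and outputs the corresponding tolls on each edge. Translating back through Lemma~\ref{lemm:induceequiv}, the returned $\Theta$ is an opt-inducing toll vector of minimum support, i.e. an optimal solution to $\mctp$.

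\medskip
\noindent\textbf{Running time.} The $l$-instance and the parse tree $T$ are built in $\mathcal{O}(m)$ time, and $T$ has $\mathcal{O}(m)$ nodes with $\mathcal{O}(m)$ leaves. I will argue that each list $lst_v$ has size $\mathcal{O}(m)$: by property~2 of Theorem~\ref{thm:optmakelist} its entries have consecutive $\eta$-values ranging from $s$ to $f$, and $f$ — the edge count needed to induce length $\infty$ in $G_v$ — is at most the number of edges of $G_v$, which is at most $m$. For a leaf, \texttt{MAKELISTPL} sorts the parallel edges and produces the list in $\mathcal{O}(m\log m)$ time; summed over all leaves this is $\mathcal{O}(m^2\log m)$, or $\mathcal{O}(m^2)$ if we charge sorting to a global presort. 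For an internal (series or parallel) node, \texttt{COMBINESERIES}/\texttt{COMBINEPARALLEL} iterates over $\mathcal{O}(m)$ target values $i\in\{s,\dots,f\}$, and for each $i$ examines all feasible divisions $(j_1,j_2)$ with $lst_{(1,j_1)}{\cdot}\eta + lst_{(2,j_2)}{\cdot}\eta = i$, of which there are $\mathcal{O}(m)$; hence $\mathcal{O}(m^2)$ per internal node. With $\mathcal{O}(m)$ nodes this gives $\mathcal{O}(m^3)$ for \texttt{MAKELIST}. Finally \texttt{PLACETOLL} does a single top-down pass of $T$, spending at each node $\mathcal{O}(m)$ time to locate the relevant `edge-length' pair (scanning $lst_r$), so $\mathcal{O}(m^2)$ overall — dominated by the list construction. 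Thus Algorithm~\ref{alg:mctp} runs in $\mathcal{O}(m^3)$ time.

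\medskip
\noindent\textbf{Main obstacle.} The conceptual content has already been discharged by Theorem~\ref{thm:optmakelist} and Lemmas~\ref{lemm:SP} and~\ref{lemm:optplacetoll}; what remains is careful bookkeeping. The one point requiring genuine care is the size bound $|lst_v| \leq |E(G_v)| + \mathcal{O}(1)$ uniformly over the parse tree, since the whole $\mathcal{O}(m^3)$ bound (and also the correctness of the $\argmax$ over divisions being over a polynomial-size set) rests on it. I would prove it by the same induction as in Theorem~\ref{thm:optmakelist}: it holds at leaves, and at a series or parallel node the range $\{s,\dots,f\}$ has $f - s + 1 \leq (f_1 - s_1 + 1) + (f_2 - s_2 + 1)$-type bound, so the child size bounds add up, matching the additivity of edge counts under series/parallel composition. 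Everything else is a routine accounting of the nested loops in Algorithms~\ref{alg:combineseries} and~\ref{alg:combineparallel}.
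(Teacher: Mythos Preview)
Your proposal is correct and follows essentially the same route as the paper: correctness is assembled from Lemma~\ref{lemm:induceequiv}, Lemma~\ref{lemm:SP}, Theorem~\ref{thm:optmakelist}, and Lemma~\ref{lemm:optplacetoll} to conclude that inducing $\ell_{max}$ via \texttt{PLACETOLL} is optimal, and the running time is obtained from the $\mathcal{O}(m)$ parse-tree size, the $|lst_v|\leq |E(G_v)|\leq m$ list-size bound, the $\mathcal{O}(m^2)$ cost of each combine routine, and the $\mathcal{O}(m^2)$ cost of \texttt{PLACETOLL}. Your extra care about the list-size bound is reasonable but the paper dispatches it in one line as ``trivial''; otherwise the arguments coincide.
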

\begin{proof}

In Algorithm \ref{alg:mctp} given instance $\mathcal{G}$  and an optimal flow $o$, we create  corresponding $l$-instance $\mathcal{S}=\{G(V,E),$ $\{l_e\}_{e\in E}, E_u\}$. We construct the parse tree $T$ of $G$ and invoke Algorithm \ref{alg:makelist} to compute the lists for every node in the parse tree $T$ which satisfy the properties in Theorem \ref{thm:optmakelist}. 
Algorithm \ref{alg:placetoll} next induces length $\ell=\ell_{max}$ optimally with $|C|$ number of edges, where $C=\{e: \theta_e >0\}$ (see Lemma \ref{lemm:optplacetoll}). Let $OPT$ be the required number of edges in an optimal solution to the $\mctp$ problem. From Lemma \ref{lemm:induceequiv} $OPT$ induces length $L\geq \ell$. But due to the Lemma \ref{lemm:SP},  $\ell\leq L$ implies $|C|\leq OPT$. Therefore, $|C|=OPT$ and the calculated toll $C$ is an optimal solution for $\mctp$.

In the remaining part we derive the run-time of our algorithm and show it is polynomial in network size. The running time for creating a parse tree $T$ of size $\mathcal{O}(m)$ for $G$ is $\mathcal{O}(m)$, while the calculation of the  $l$-instance $\mathcal{S}$  and $\ell_{max}$ also takes $\mathcal{O}(m)$. 

 The number of runs of Algorithm \ref{alg:makelist}, in the recursive routine, when called from the root of $T$ is exactly equal to the number of nodes in $T$. Therefore, in order to bound the run time for Algorithm \ref{alg:makelist} first we need an upper bound on the size of each list created. It is trivial to note that the size of list $lst_v$ is at most the number of edges in the subgraph $G_v$ rooted at node $v$. Further we have $|lst_v|\leq m$, for all $v$, as $G_v$ is a subgraph of $G$. 
Next requirement are the run-times of Algorithm \ref{alg:makelistpa}, Algorithm \ref{alg:combineseries} and  Algorithm \ref{alg:combineparallel}. 
 
 In the `leaf' nodes the time complexity for Algorithm \ref{alg:makelistpa} is dominated by the sorting step in line \ref{alg:makelistpa:order} resulting in $\mathcal{O}(m \log m)$ time requirement. Its easy to observe that Algorithm \ref{alg:combineseries} and  Algorithm \ref{alg:combineparallel} have similar structure resulting in the same time complexity. We show the time complexity of Algorithm \ref{alg:combineseries} is $\mathcal{O}\left(m^3\right)$. The time complexity for Algorithm \ref{alg:combineparallel} follows similarly. Combining these we obtain that  Algorithm \ref{alg:makelist} in line \ref{alg:mctp:makelist} of Algorithm \ref{alg:mctp} is executed in $\mathcal{O}\left(m^3\right)$ time.

  In Algorithm \ref{alg:combineseries} the number of iteration in line \ref{alg:combineseries:for} is bounded by the list size at the specific node. Moreover, for each iteration, in time linear in the list size the algorithm computes the feasible set $\mathcal{I}$ in line \ref{alg:combineseries:div}, where the total ordering of each list plays a crucial role. The execution of the maximization step in line \ref{alg:combineseries:maxdiv} occurs in linear time as well. So the overall run-time of Algorithm \ref{alg:combineseries} is $\mathcal{O}\left(m^2\right)$ and completes the above argument.
 
Algorithm \ref{alg:placetoll} when called from the root node of $T$ runs exactly once for each node in $T$. The runtime at each individual node is bounded by $\mathcal{O}(m)$ resulting in a total runtime of $\mathcal{O}\left(m^2\right)$.
In conclusion, Algorithm \ref{alg:mctp} generates optimal solution to $\mctp$ problem for $\sepr$ graphs in $\mathcal{O}\left(m^3\right)$ time.    \qed

\end{proof}


\section{Conclusion}
In this paper we consider the problem of inducing the optimal flow as network equilibrium and show that the problem of finding the minimum cardinality toll, i.e. the $\mctp$  problem,  is NP-hard to approximate within a factor of $1.1377$. Furthermore we define the minimum cardinality toll with only used edges left in the network and show in this restricted setting the problem remains NP-hard even for single commodity instances with linear latencies. We leave the hardness of approximation results of the problem open. Finally, we propose a polynomial time algorithm that solves $\mctp$ in series-parallel graphs, which exploits the parse tree decomposition of the graphs. The approach in the algorithm fails to generalize to a broader class of graphs. Specifically, the monotonicity property proved in Lemma \ref{lemm:SP} holds in series-parallel graphs but breaks in general graphs revealing an important structural difficulty inherent to $\mctp$ in general graphs. Future work involves finding approximation algorithms for $\mctp$. The improvement of the inapproximability results presented in this paper provides another arena to this problem, e.g. finding stronger hardness of approximation results for $\mctp$ in multi-commodity networks.

\section*{Acknowledgement}
We would like to thank Steve Boyles and Sanjay Shakkottai for helpful discussions. This work was supported in part by NSF grant numbers CCF-1216103, CCF-1350823 and CCF-1331863.

\bibliography{mybib}
\bibliographystyle{unsrt}

\end{document}